\newtheorem{theorem}{Theorem}[section]
\newtheorem{construction}[theorem]{Construction}
\newtheorem{remark}[theorem]{Remark}
\newtheorem{definition}[theorem]{Definition}
\begin{document}
%
\title{Array BP-XOR Codes for Hierarchically Distributed Matrix Multiplication}
%
%
%

\author{Suayb~S.~Arslan,~\IEEEmembership{Member,~IEEE} 
\thanks{S. S. Arslan is with the Department
of Computer Engineering, MEF University, Maslak,
Istanbul, Turkey, e-mail: arslans@mef.edu.tr.}
\thanks{The contents of the paper are partially presented in two IEEE International Symposium on Information Theory (ISIT) conferences which were held in 2018 and 2019 in Colorado, USA and Paris, France, respectively.}
\thanks{Copyright (c) 2021 IEEE. Personal use of this material is permitted.  However, permission to use this material for any other purposes must be obtained from the IEEE by sending a request to pubs-permissions@ieee.org.}}

\markboth{Accepted to IEEE Transactions on Information Theory, Peer-Reviewed and Unedited Version, Nov.~2021}%
{Arslan \MakeLowercase{\textit{et al.}}: Array BP-XOR Codes for Hierarchically Distributed Matrix Multiplication}
%



\maketitle

\begin{abstract}
A novel fault-tolerant  computation technique based on array Belief Propagation (BP)-decodable XOR (BP-XOR) codes is proposed for distributed matrix-matrix multiplication. The proposed scheme is shown to be configurable and suited for modern hierarchical compute architectures such as Graphical Processing Units (GPUs) equipped with multiple nodes, whereby each has many small independent processing units with increased core-to-core communications.  The proposed scheme is shown to outperform a few of the well--known earlier strategies in terms of total end-to-end execution time while in presence of slow nodes, called \textit{stragglers}. This performance advantage is due to the careful design of array codes which distributes the encoding operation over the cluster (slave) nodes at the expense of increased master-slave communication. An interesting trade-off between end-to-end latency and total communication cost is precisely described. In addition, to be able to address an identified problem of scaling stragglers, an asymptotic version of array BP-XOR codes based on projection geometry is proposed at the expense of some computation overhead. A thorough latency analysis is conducted for all schemes to demonstrate that the proposed scheme achieves order-optimal computation in both the sublinear as well as the linear regimes in the size of the computed product from an end-to-end delay perspective.
\end{abstract}

\begin{IEEEkeywords}
Distributed systems, coded computation, array codes, projection geometry, belief propagation, matrix multiplication.
\end{IEEEkeywords}

\IEEEpeerreviewmaketitle

\section{Introduction}

\IEEEPARstart{T}{oday} data science led to enormous computation and storage requirements that have transcended all expectations. The imminent consequence of this has been the distribution of data and associated computation work over large clusters of commodity processing and storage devices that are typically less capable than enterprise systems. Moreover, when these devices work together in large groups, their mean-time-to-failure can drop from a few years to a few days making their availability significantly less than the execution time of many current high-performance computing applications \cite{schroeder2007,Schroeder2010,yuan2012}. On the other hand, due to fail-stop or slow compute nodes, coined  as \textit{stragglers} in literature, the main objective of distributed computing may be compromised i.e., total end-to-end latency is severely impacted by the slowest workers in the cluster which renders the any-topology distributed computation ineffective. In a master-slave configuration, for instance, the serial portion of the overall computation may lead to a bottleneck at the master node if heavy and dependent calculations are performed. In \cite{yadwadkar2016}, it is shown that stragglers may run eight times slower than the average worker performance using
Amazon EC2 instances. Recently, fault--tolerant and efficient parallel matrix computations have gained momentum due to their immediate application to various machine learning and inference algorithms \cite{wu2011}. However, the majority of these studies primarily deals with  a faulty device problem which may lead to erroneous computation or else inevitable round-off errors accounted for by algorithmic approaches \cite{chen2008}. 
I wish you the best of success.

The primary study that tackles the stragglers (i.e., slow--performing nodes) is based on a \textit{parameter server} concept in which machine learning algorithms are shown to scale on a distributed network \cite{Li2014} in presence of fail-stop system behaviors. Later, the alternative idea of \textit{coded computation} is proposed to provide computation redundancy in robust system design against the stragglers. More specifically, in order to economically use the compute infrastructure, a coded computation framework based on a family of Maximum Distance Separable (MDS) codes is proposed in \cite{Lee2016}. We realize that coded computation can be seen as the reincarnation of algorithm-based fault tolerance for various computation tasks such as matrix--matrix multiplication \cite{Huang1984}. Later that study, the idea is exhaustively exercised in various computing tasks including large matrix multiplications, gradient computing \cite{Tandon2017} and its more recent extensions \cite{Reisizadeh2019}, convolutions \cite{Dutta2017}, optimizations \cite{karakus2017}  and Fourier transforms \cite{Yu2017F}. Despite the proposed coding scheme considering a large-scale matrix multiplication as an example, we can extend it to other types of computation that have matrix multiplication (dot products) at the core such as linear signal transformations and multi-class classifications. Particularly, the training phase of distributed machine learning algorithms are most affected by the presence of stragglers and coded computation has been investigated to solve the main trade-off between overall communication cost and service latency \cite{songze2017}. To ensure the scalability of distributed computing, the overall job execution time must be minimized.  However, most of the previous literature have focused on the  parallel task time in which the encode/decode time is excluded from the overall end-to-end latency performance. On the other hand, the main performance objective of this study would be the minimum end-to-end latency rather than measuring the pure parallel task time by allowing distributed encoding and low-complexity decoding. Rateless codes are investigated in the context of coded computing \cite{Luby2002} which naturally enables near--linear time complexity for the decoding.  However, this study is based on matrix-vector multiplication and uses Robust Soliton distribution where the maximum degree per node does not have a bound making the distributed encoding infeasible due to unacceptable communication load. Note that a distinctive feature of distributed computing is to be able to distribute encoding over the cluster nodes which is largely ignored by the past literature. In \cite{wang2018}, sparse codes are proposed which possess  similar  features so that encoding could be distributed. However, the authors mostly focused on recovery threshold, computation complexity and decoding time rather than mathematically analyzing the encoding workload. Distribution of encoding (in information-theoretic sense) is considered very recently \cite{maddah2020} where authors introduce errors in the encoding process and investigate the fundamental limits for accurate decoding and set necessary conditions. In such a setting an adversary is assumed whereas, in our scenario, no such adversary exists. Moreover, straggler mitigation  was not taken into account. 
In this paper, we propose array codes to ensure low-complexity encoding/decoding and efficient distribution of parallel as well as encoding workload over the cluster nodes. We further propose asymptotic extensions to allow for increased failures in scaling clusters.

\subsection{Linear Array Codes}

Array codes are linear codes defined for two-dimensional data structures organized typically in a matrix format including both data and parities as columns. These codes are quite attractive candidates for burst error recovery in communication \cite{dholakia2004} and distributed storage systems \cite{Farrell} and provide data reliability with optimal time/space consumption thanks to block Maximum Distance Separability (MDS) property in the code construction process. Moreover,  a great deal of work has been done and many improvements have been proposed for these codes over the years \cite{Blaum} to secure simpler math, low-complexity computations and the block MDS property all at the same time.

Typically, any linear code can be represented using a bipartite graph either using the parity check matrix or the generator matrix of the code \cite{shulin}. Using the generator matrix representation, the corresponding bipartite graph has two types of nodes: Nodes that are used to decode (check or coded nodes) and nodes that are decoded (information nodes). Nodes in bipartite graph representation are connected with edges to represent node adjacency. The neighbors of node $j$ (neighbor set), denoted by $\mathcal{N}_j$, is the set of all nodes connected to node $j$. The cardinality of the neighbor set is called the \emph{degree} of node $j$. The Belief Propagation (BP) algorithm a.k.a. message passing algorithm is a low complexity iterative decoding process (updating nodes and edges) to reconstruct data from unerased coded nodes using the sparse bipartite graph representation of the code. We begin by setting all the contents of information nodes to NULL. Then, we look for a degree-one coded node and copy the contents to its neighbor information nodes by replacing NULL. Next, we update all the coded nodes that are connected to this neighbor and eliminate the edges that established neighborhood relationships. This completes the first step, and in the next iteration, we continue applying the same methodology until there remains no information node with NULL content. If the algorithm stops prematurely during iteration, we claim a decoding failure, otherwise, we report a decoding success. Array codes have recently been studied under BP decoding \cite{Wang} and useful upper bounds are derived in \cite{Paterson2013} that theoretically establishes the relationship between the block length (and hence the rate of the code), decodability, and sparsity of the generator matrix i.e., the encoding/decoding complexity of the code.

\subsection{Motivation and Contributions}

First of all, for a given file block size, we primarily  demonstrate in this study that by relaxing the block MDS constraint on the code construction process, the previously found bounds on the code block length \cite{Paterson2013} can be relaxed while ensuring successful decoding of the file block through low complexity BP algorithm. In other words, previous works on array MDS BP-XOR codes, due to their construction, do not allow the number of failures to grow with the block length while maintaining a code rate smaller than one. Such an observation shall yield more flexible and powerful code constructions for distributed computing. For instance, a carefully chosen fixed code rate would allow dealing with linearly scaling stragglers in a large network of computing devices \cite{Schroeder2010}. Furthermore, we introduce asymptotically MDS array codes as an alternative and shall consider a discrete geometry construction based on Mojette Transform that is recently studied within the context of low density parity check codes and is shown to reduce the node repair complexity  \cite{ArslanMojLDPC}. By providing and establishing an appropriate set of code parameters, we explicitly construct codes that fulfill the desired theoretical requirements.

On the other hand, most of the existing works  have focused on small to moderate size matrix multiplication operation and thereby improving the worker node task runtime while the encoding/decoding times at the master node are assumed to be negligibly small. However, although the master node workload may be acceptable for small-scale (few tens of nodes) networks and moderately sized matrices, it shall be extremely prohibitive for large-scale (over thousands of nodes) compute tasks. Hence, the total execution time is considered to be the real optimization criterion whereby the encoding and decoding processes need to be low complexity, parallelizable and distributable. While achieving  better total execution time, the system should not lose the \textit{recovery threshold} performance due to stragglers \cite{Yu2017}. In our study, we address this issue by distributing the encoding operation over the compute cluster and allowing BP to resolve the actual matrix multiplication operation at the master. Secondly, modern compute nodes are equipped with multiple typically equal-quality cores such as Central Processing Unit (CPU) instances possessing over hundreds of physical cores or Graphical Processing Unit (GPU) instances with thousands of CUDA cores. By considering each core as a standalone network processor, and the distinctive cost nature of communications between these cores and with other network nodes, it is intuitive that coding can be exploited for optimal utilization of the underlying infrastructure.

Overall, one of the main differentiations of this study is that it focuses on end-to-end user delay rather than the pure parallel task time and clearly demonstrates advantages/disadvantages of the proposed coded computation with scaling clusters both for sublinear as well as the linear regime in the size of the product at hand. We recognize the fixed number of straggler proofing using the original array BP-XOR code constructions and further proposed to use asymptotical versions to address scaling number of stragglers. In addition, we assumed a hierarchically clustered compute architecture with varying degrees of parallelism which aligns with the realistic infrastructures built today with Google and Amazon virtualized environments. Furthermore, the proposed coding scheme allows distributing encoding workload over the cluster nodes at the expense of more bandwidth utilization opening up new options in the computation-bandwidth trade-off space. For the sake of generality, we consider generic matrix sizes instead of square matrices.  

The rest of the paper is organized as follows. In Section \ref{SectionCC}, system model is introduced.  In Section \ref{AMDSBPXOR}, array BP-XOR codes are primarily introduced for distributed computation, and then, its asymptotical version is formally defined. A discrete geometry construction is proposed to enhance the achievable code rate of the classic case. Section \ref{LatencyPerf} analyzes the end-to-end average latency performance as well as communication costs of the proposed class of codes in a distributed coded computation setting and Section \ref{numerical} provides numerical results to support our arguments. Finally,  Section \ref{conc11} concludes the paper. Some of the proofs are intentionally omitted for the smooth flow of the paper and later provided in appendices.

\section{Coded Computation System Model} 
\label{SectionCC}

Let us consider the multiplication of two large matrices i.e., $\mathbf{A}^\intercal \mathbf{B}$  where  $\mathbf{A} \in \mathbb{F}^{s \times k}$ and $ \mathbf{B} \in \mathbb{F}^{s \times b}$ and $\mathbb{F}$ denotes any algebraic field. For a generic matrix $\mathbf{X}$, we use $\mathbf{x}_{i,:}$ to denote the $i$-th row and $\mathbf{x}_{:,j}$ the $j$-th column for the rest of the paper. Thus, computing $\mathbf{A}^\intercal \mathbf{B}$ amounts to $kb$ dot products (each containing $s$ multiplications and $s-1$ additions). In our system, the \textit{Generator} unit (a.k.a. the master node) either communicates individual rows/columns of $\mathbf{A}$ and $\mathbf{B}$ or their sums to $n$ compute nodes and corresponding dot products will be performed by $n \geq k$ nodes each equipped with $M \geq b$ processors in which only $b$ processors are assigned to task execution by the local scheduler. Note that if $M < b$, each processor will have to execute more than one dot product. In that case, we group (partition) rows of $\mathbf{A}$ and columns of $\mathbf{B}$ in such a way to equally distribute computation on the available processors in the network. The following remark presents a particular partitioning strategy that might be needed to achieve the constraint $b<nM$.

\begin{remark}
For any given integer $m$ satisfying the relation $M>m>0$, we divide the columns of $\mathbf{A}$ in to $m$ equal size ($s \times k/m$) matrices and the columns of $\mathbf{B}$ in to $m$ equal size ($s \times b/m$) matrices. In that case, $m^2$ processors multiply two matrices of sizes $k/m \times s$ and $s \times b/m$. Note that we do not guarantee $m^2 < nM$ i.e., total number of submatrix computations being less than total number of processors without an additional constraint. Suppose for instance $m|b$ and $b|k$. With that constraint, we can multiply fractions of matrices in each node whereby each processor unit performs small matrix multiplications of sizes $mk/b \times s$ and $s \times m$ where $s$ is typically assumed to be large. In that case, we have $b$ matrix multiplications of sizes $mk/b \times b/m$ implying $b \leq nm < nM$.
\end{remark}

Since core-to-RAM communication is orders
of magnitude faster than node-to-node communication, we consider an appropriate multi-processor setting that takes into account this observation. The summary of the coded master-slave compute cluster architecture is illustrated in Fig. \ref{fig:cc1}. Note that a similar multi-processor setting is considered in \cite{LeeRam2017} where non-linear local functions are computed with a fixed and equal number of local cores per node.  

\begin{figure}[t!]
\centering
  \includegraphics[width=\linewidth]{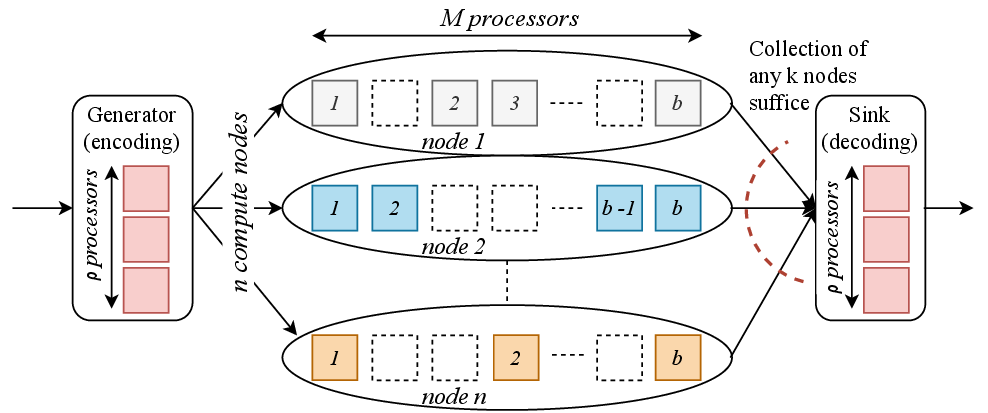}
  \caption{A clustered compute architecture is considered in this study. Master's processors are more capable and perform operations $c$ times faster than the node processors. Each node is equipped with $M$ processors/cores and only a subset of size $b$ processors/cores are typically assigned to the computation. The same architecture can be used to perform coded computation in case asymptotic extensions of MDS Array BP-XOR codes are used where $M=\max\{b_i\}$ may be chosen with $M \geq b$.}
  \label{fig:cc1}
  \vspace{-5mm}
\end{figure}

In this setting, the matrices being operated on are provided as inputs to the Generator unit  where the encoding operation typically takes place on $\rho$ processors. However, the encoding operation can alternatively be distributed over the cluster nodes if the associated coding structure allows so\footnote{For instance as it will be demonstrated with this study, array codes based on pure XOR logic that can be decoded using belief propagation can provide such flexibility at the expense of increased communication cost.} as shall be shown with the help of an example in the next section. Encoded rows/columns of the matrices are communicated with the nodes of the cluster in which a total of $bn$ processors compute the matrix--matrix multiplication together. We assume the cluster processors are slower than that of the master node by a factor of $c \geq 1$. Finally, the \textit{Sink} unit collects a subset of  processor outputs to initiate decoding i.e., putting together the final product in place.  Generator and Sink are not necessarily two physical nodes as drawn in this figure. Rather, they are abstracted units that may reside in the same physical (master) node. By construction, having all processor outputs of any $k$ out of $n$ nodes will suffice to reconstruct $\mathbf{A}^\intercal \mathbf{B}$. 

As shown in Fig. 1, the total execution time in our clustered distributed setting is given by the sum of encoding time $T_{e}$, master-slave transmission time $T_{ms}$, the overall slave task time where each slave completes its execution with time $T_i$, slave-master transmission time $T_{sm}$ and finally the decoding time at the master denoted by $T_{d}$. Hence, we can express the overall latency similar to \cite{baharav2018} and \cite{Suayb2019} as
\begin{eqnarray}
T_{total} = T_e + T_{ms} + \min_{S \in \mathcal{S}} \max_{i \in S} T_i + T_{sm} + T_{d} \label{eqnTtotal}
\end{eqnarray}
where $\mathcal{S}$ is the  set of all minimal decodable subsets of $\{1,2,\dots,nb\}$ processor outputs. Most of the past work focuses on minimizing the parallel task time while paying little if no attention to encode/decode times. Plus, some of the previous works favor parallel task execution times at the expense of consuming more bandwidth and master system CPU resources \cite{Fahim}. However, as the system size (as well as the matrix sizes) scales, the encode/decode times of the master node will become the main bottleneck of the overall system performance. In our work, $T$ will represent the sum of computation times, i.e., $T = T_e + \min_{S \in \mathcal{S}} \max_{i \in S} T_i  + T_{d}$ which will constitute the main focus of this paper.

\section{Asymptotically MDS Array BP-XOR Codes for Matrix Multiplication}
\label{AMDSBPXOR}

Before defining the class of asymptotically MDS array BP-XOR codes, let us provide the conventional definition of MDS BP-XOR codes using the notation of  \cite{Paterson2013}. Accordingly, let $\mathrm{l}$ be the symbol size in bits and $\mathcal{M} = \{0,1\}^\mathrm{l}$ be the symbol set from which we select our information as well as coded symbols. The fundamental operation we use is the Exclusive OR (XOR). In our study, nodes represent blocks of data that contain one or more symbols in it. Symbols are the smallest data unit over which XOR operations are defined. 

\subsection{Array BP-XOR codes for Coded Computation}

An $[n,k,t,b]$ array BP-XOR code is a $b \times n$ two dimensional rate $r=k/n$ binary linear code $\mathcal{C} = [a_{i,j}]_{1 \leq i \leq b, 1 \leq j \leq n}$ in which the coding symbol $a_{i,j} \in \mathcal{M}$ is the XOR of a subset of source symbols $I = \{v_1,\dots,v_{bk}\}$, typically structured as a $b \times k$ data matrix, and $I$ can be reconstructed from any $n-t$ columns of the linear code $\mathcal{C}$ using BP algorithm for an appropriate integer $t \leq n-k$.  The degree of a coded symbol $a_{i,j}$, denoted as $\sigma_{i,j}$ ( $\leq \sigma$, the maximum node degree number of the code), is the number of information symbols that participate in logical XOR operation i.e., $a_{i,j} = v_{z_1} \oplus \dots \oplus v_{z_{\sigma_{i,j}}}$ such that $v_{z_s} \in I$ for all $s \in \{1,\dots,\sigma_{i,j}\}$. A $t$-erasure correcting array BP-XOR code is block MDS if the source symbols can be reconstructed from $k = n-t$ columns of $\mathcal{C}$.

Just like product codes \cite{baharav2018}, \cite{Park2019}, we encode computation in two dimensions however with the exception that the encoding is not only in vertical and horizontal directions, both could be in any carefully chosen direction, which shall provide more flexibility between the distribution of encoding and the use of bandwidth. In addition, every computation task of the product-coded scheme involves only a single dot product whereas our scheme performs a maximum of $\sigma \geq 1$ dot products per processor yielding $b\sigma$ dot products per node in the worst case. However, in the product-coded scheme, the master node performs heavy and unbalanced encoding operations which will be shown to be a bottleneck from an overall latency point of view. With the array BP-XOR codes, encoding operation may be distributed over the clustered computation network at the expense of increased bandwidth consumption between the master node and the processors of the cluster nodes. On the other hand, distributing the decoding task among the network nodes is still an open research topic.

\textbf{Example 1:} To illustrate specifically the assigned jobs for each node, let us suppose we would like to compute (with $b = m = 2$) the following simple matrix multiplication,
\[
\mathbf{A}^\intercal \mathbf{B} =
\begin{bmatrix}
    \mathbf{a}_{1,:}   \\ \mathbf{a}_{2,:} 
\end{bmatrix}
\begin{bmatrix}
    \mathbf{b}_{:,1}  &
    \mathbf{b}_{:,2} 
\end{bmatrix} 
= 
\begin{bmatrix}
    v_1  & v_2 \\
    v_3 & v_4 
\end{bmatrix} 
\]
by using $[5,2,2,2]$ MDS array  BP-XOR code with $\sigma=2$, given in Table \ref{table1} with the designated computation distribution among $n=5$ nodes each equipped with two processors. For instance, node 1 receives the entire matrices $\mathbf{A}$ ($\mathbf{a}_{1,:}$, $\mathbf{a}_{2,:}$) and $\mathbf{B}$ ($\mathbf{b}_{:,1}, \mathbf{b}_{:,2}$) to be able to compute $v_1$ on its processor 1, $v_2$ and $v_3$ and hance $v_2+v_3$ on its processor 2. On the other hand, node 5 gets $\mathbf{A}$ ($\mathbf{a}_{1,:}$, $\mathbf{a}_{2,:}$) and $ \mathbf{b}_{:,1} +  \mathbf{b}_{:,2} $ to compute $v_1+v_2$ on its processor 1 and $v_3+v_4$ on its processor 2 which reduces the bandwidth compared to what node 1 gets. 

\begin{table}[t!]
\centering
\begin{tabular}{|c|c|c|c|c|c|}
\hline
& node 1 & node 2 & node 3 & node 4 & node 5 \\ 
 \hline\hline
processor 1 & $v_1$ & $v_2$ & $v_3$ & $v_4$ & $v_1 + v_2$ \\ \hline
processor 2 & \cellcolor{gray!25} $v_2 + v_3$ & \cellcolor{gray!25} $v_1 + v_4$ & $v_2 + v_4$ & $v_1 + v_3$ & $v_3 + v_4$ \\
\hline
\end{tabular}
\vspace{1mm}
\caption{[5,2,2,2] MDS array BP-XOR Code for $\sigma=2$} \label{table1}
\vspace{-2mm}
\end{table}

From these examples, we observe based on the code construction that encoding can optionally be carried out in the master or the cluster nodes. However, in some of the subcomputations such as $v_2+v_3$ or $v_1+v_4$ (colored gray in Table \ref{table1}), the encoding has to be distributed. On the other hand, for the rest of the subcomputations, the encoding can be distributed among other compute nodes at the expense of more communication. However, if we choose to minimize the bandwidth, the master could help with the encoding by executing row or column additions. In Section IV.B, to be able to simplify the system description we assume the worst case scenario in terms of communication cost i.e., $\sigma$ times larger than that of competent schemes, and let all encoding processes take place on cluster nodes. 

Note that with this setting as soon as any two out of five nodes complete their processing, the master node will initiate a belief propagation decoding to put together $\mathbf{A}^\intercal \mathbf{B}$. Although in this case four-processor outputs of any selection of two nodes are sufficient to reconstruct the result, we have to wait for all ingredient processors (due mainly to block MDS property) to finish their execution. On the other hand, the recovery threshold of this code can be shown to be 7, i.e., any 7 processor executions will be sufficient to reconstruct the result in the worst case. 

For comparison purposes, let us also give an example for a polynomial code \cite{Yu2017} with similar parameters i.e., $m=2$ and $k=4$. Let us define $a_j := \mathbf{a}_{1,:} + j\mathbf{a}_{2,:}$ and $b_j := \mathbf{b}_{:,1} + j^2\mathbf{b}_{:,2}$.  With this definition, the node 1 shall receive $a_1$, $a_6$, $b_1$ and $b_6$ and compute the dot product $a_1b_1$ on processor 1 and $a_6b_6$ on processor 2 as shown in Table \ref{table2}. The rest of the nodes receive the same amount of data and executes exactly two products. Note that the distribution of the tasks can be done in any order as the computation of each task has the same complexity. Unlike MDS array BP-XOR codes, node processors in this case computes a single dot product provided that the master first encodes and generates $a_j$s and $b_j$s. On the other hand, due to the sparse nature of  MDS array BP-XOR codes, some nodes (such as node 5) consumes less bandwidth at the expense of more dot product computations. Also the encoding operation at the master is simpler compared to polynomial codes. The recovery threshold of polynomial codes is given by $m^2 = 4$ achieving the minimum possible.

\textbf{Example 2:} Let us suppose we would like to compute (with $b = m = 3$) the following matrix multiplication,
\[
\mathbf{A}^\intercal \mathbf{B} =
\begin{bmatrix}
    \mathbf{a}_{1,:}   \\ \mathbf{a}_{2,:} \\ 
    \mathbf{a}_{3,:} 
\end{bmatrix}
\begin{bmatrix}
    \mathbf{b}_{:,1}  &
    \mathbf{b}_{:,2} &
    \mathbf{b}_{:,3} 
\end{bmatrix} 
= 
\begin{bmatrix}
    v_1  & v_2 & v_3 \\
    v_4 & v_5 & v_6 \\
    v_7 & v_8 & v_9
\end{bmatrix} 
\]
by using $[5,3,2,3]$ MDS array  BP-XOR code with $\sigma=2$, as given in Table \ref{table33} with the designated computation distribution among $n=5$ nodes each equipped with three processors/cores. As can be seen, computations in gray cells do not share any common terms from matrices $\mathbf{A}$ and $\mathbf{B}$. Thus, encoding for these computations has to be distributed. Note that the ratio of required distributed encoding is 6/15 = 0.4 for this code whereas it was 2/10 = 0.2 for the code given in Table \ref{table1}, showing the natural dependency on the code construction.

\begin{table}[t!]
\centering
\begin{tabular}{|c|c|c|c|c|c|}
\hline
& node 1 & node 2 & node 3 & node 4 & node 5 \\ 
 \hline\hline
processor 1 & $a_1b_1$ & $a_2b_2$ & $a_3b_3$ & $a_4b_4$ & $a_5b_5$ \\ \hline
processor 2 & $a_6b_6$ & $a_{7}b_{7}$ & $a_8b_8$& $a_9b_9$ & $a_{10}b_{10}$ \\
\hline
\end{tabular}
\vspace{1mm}
\caption{A (10,4) Polynomial MDS Code. 
} \label{table2}
\vspace{-2mm}
\end{table}

\begin{table}[t!]
\centering
\begin{tabular}{|c|c|c|c|c|c|}
\hline
& node 1 & node 2 & node 3 & node 4 & node 5 \\ 
 \hline\hline
processor 1 & $v_1$ & $v_1 + v_2$ & $v_2 + v_3$ & $v_7$ & $v_3$ \\ \hline
processor 2 & \cellcolor{gray!25}  $v_3 + v_4$ & $v_4 + v_5$ & $v_5 + v_6$ & \cellcolor{gray!25} $v_9 + v_1$ & \cellcolor{gray!25}  $v_2+v_6$ \\
\hline
processor 3 & \cellcolor{gray!25}  $v_6 + v_7$ & $v_7 + v_8$ & $v_8 + v_9$ & \cellcolor{gray!25}  $v_4 + v_8$ & \cellcolor{gray!25}  $v_9 + v_5$ \\
\hline
\end{tabular}
\vspace{1mm}
\caption{[5,3,2,3] MDS array BP-XOR Code for $\sigma=2$} \label{table33}
\vspace{-5mm}
\end{table}

\subsection{Asymptotical Extensions}

Let $\sigma$ to be the maximum node degree of a given array BP-XOR code, we note from \cite{Paterson2013} that if $k = \sigma$ it is not hard to show that 
\begin{eqnarray}
  n &\leq& kb + 1 + \max\{k-3,0\}
\end{eqnarray}
from which we easily deduce that the upper bound on $n$ can be arbitrarily large (i.e., for $b \gg 1$) and allow any arbitrarily small code rate $r$ to be possible. However, for $k > \sigma$ it is observed that the array code blocklength $n$ is upper bounded based on a specific choice of $k$ \cite{Paterson2013}. In addition, we observe from the same study that for $b \gg 1$ and large enough $k$ i.e., $k > \sigma^2$ we have $n \leq k + \sigma -1$. This also implies that for a large enough information block length $k$, the achievable rate will be close to 1, putting a constraint on the code design rate. This ultimately means that the portion of straggler tolerance would not be scaling well with the size of the cluster. By fixing $\sigma$, we shall control the complexity of encoding/decoding processes and as we shall see in later sections the overall end-to-end computation latency. We considered an asymptotic extension  of such codes next to allow better flexibility in terms of choosing the right code rate for the given coded computation system (i.e., scaling number of stragglers) at the expense of using slightly more processors work per node (Remember that we assume to have $M \geq b$ cores per node to respond to such requirement if need be). 

For a given positive integer $b^\prime$ satisfying $b^\prime > b$, a $[n,k,t,b, b^\prime]$ asymptotically MDS array BP-XOR code $\mathcal{C}^a$ is a linear code with $i$-th column $(y_{i,1}, \dots, y_{i,b_i}) = (x_{1}, \dots, x_{bk})G_i$ for a $bk \times b_i$ generator matrix $G_i, i \in \{1,\dots,n\}$ such that $b^\prime = (1/n)\sum_ib_i$. Thus, the generator matrix for $\mathcal{C}^a$ is given by the $bk \times \sum_i b_i$ matrix,
\begin{eqnarray}
G_{\mathcal{C}^a} = [G_1 | G_2 | \dots | G_n].
\end{eqnarray}

A $t$-erasure correcting asymptotically MDS array BP-XOR can perfectly reconstruct the data matrix $I$ from any $n-t=k$ column combinations of $\mathcal{C}^a$ using BP decoding and as $b \rightarrow \infty$ we have $b^\prime \rightarrow b$. Note that the raw source data need not be in standard $b \times k$ form. For any positive integer $g$ satisfying $b|g$ and $k|g$, the generator matrix $G_{\mathcal{C}^a}$ should work fine for different arrangements of the data block matrix such as $b/g \times kg$. We finally note that the code $\mathcal{C}^a$ is not in two-dimensional standard rectangle form as in $\mathcal{C}$. However, we introduced another parameter $b^\prime$ to be able to make asymptotically MDS array BP-XOR codes analogous to standard MDS array codes defined over rectangle shape binary matrices. 

For a given fixed code rate $r$ and $n$, let us define $\epsilon(b,n)$ to be the maximum coding overhead\footnote{Since columns of $\mathcal{C}^a$ may have different sizes, the overhead depends on which $k$ columns are used for reconstruction. Thus, $\epsilon(b,n)$ is the maximum over all combinations of $k$ columns.} Eventually, the coding overhead depends on the number of columns $n$ in the code, so called array code blocklength. of $\mathcal{C}^a$ satisfying $b^\prime = (1 + \epsilon(b,n))b$. The asymptotically optimal overhead property implies that as $\epsilon(b,n) \rightarrow 0$ we have $b\rightarrow \infty$. Let us provide the following theorem that sets the necessary condition/s on the parameters for the existence of asymptotically MDS array BP-XOR codes.

\begin{theorem} \label{Thm21}
Let  $\mathcal{C}^a$ be a $[n,k,t,b, b^\prime]$ asymptotically MDS array BP-XOR code such that the maximum coded node degree $\sigma$ satisfies $2 < \sigma < (bk-1)/(b^\prime-1)$. Then, we have
\begin{eqnarray}
n &\leq& k + \sigma - 1 \dots \nonumber \\
&& \ \  + \left\lfloor \frac{b(k(\sigma^\prime - \sigma) + (\sigma-1)\sigma^\prime) - (\sigma-1)(3\sigma/2 - 1)}{b(k - \sigma^\prime) + \sigma - 1} \right\rfloor \nonumber
\end{eqnarray}
where $\sigma^\prime = \sigma(1 + \epsilon(b,n))$ and $\epsilon(b,n)$ is the maximum coding overhead.
\end{theorem}

\begin{proof}
Since the code is assumed to be block MDS, i.e., it is able to tolerate $n-k$ column erasures of $\mathcal{C}^a$,  each information symbol $v_s \in I$ must appear in at least $n-k+1$ columns, otherwise information symbols cannot be reconstructed. Since there are $kb$ symbols, we shall have
\begin{eqnarray}
kb(n-k+1) \label{eqn1}
\end{eqnarray}
minimum symbol appearances in $\mathcal{C}^a$. On the other hand, we observe that belief propagation decoding needs to have degree-one encoding symbols to start decoding. So we need at least $n-k+1$ degree-one symbols in distinct columns of  $\mathcal{C}^a$ (in the worst case of $n-k$ column erasures when each may comprise one degree-one symbol). Similarly, we need at least one degree-two, one degree-three, $\dots$, one degree-$(\sigma-1)$ coding symbols to make sure that BP decoding continues. Although it is possible to have multiple degree-two symbols and continue BP decoding, by making this choice we attempt to maximize the appearance of information symbols in $\mathcal{C}^a$. Note that if these symbols happen to be in distinct unerased columns, the bound could be tightened, otherwise the bound might still be loose for say if $ \sigma > k + 1$ which is not usually typical. Therefore, in such a formulation a total of $(n-k+1)+\sigma-2 = n-k+\sigma-1$ symbols are assigned degrees. The rest of the $b^\prime n - (n-k+\sigma-1)$ symbols can have at most $\sigma$ degree. Thus, $C^a$ can have at most
\begin{eqnarray}
\sigma(b^\prime n- (n-k+\sigma-1)) + n - k + 1 + \sum_{i=2}^{\sigma-1} i 
\end{eqnarray}
or equivalently,
\begin{eqnarray}
\sigma(b^\prime n- (n-k+\sigma-1)) + n - k  +  \frac{\sigma(\sigma-1)}{2} \label{eqn2}
\end{eqnarray}
appearances of $kb$ information symbols. We can rewrite (\ref{eqn2}) in a more compact form as
\begin{eqnarray}
\sigma b^\prime n - (\sigma-1)(n-k+\sigma/2)
\end{eqnarray}

Using the natural relation $(\ref{eqn1}) \leq (\ref{eqn2})$, and assuming we have $b(k - \sigma^\prime) + \sigma - 1 > 0$, we can collect all terms that includes $n$ on the left hand size and find an upper bound on $n$ as follows,
\begin{eqnarray}
kb(n-k+1) &\leq& \sigma b^\prime n - (\sigma-1)(n-k+\sigma/2) 
\end{eqnarray}
which leads to 
\begin{eqnarray}
n &\leq& \left\lfloor \frac{(kb+\sigma - 1)(k-1) - (\sigma - 1)(\sigma/2-1)}{b(k - \sigma^\prime) + \sigma - 1 } \right\rfloor \label{eqn3} \\
&=& k + \sigma - 1 \dots \nonumber \\ 
&& \ \ \ \ + \left\lfloor \frac{b(k(\sigma^\prime - \sigma)  (\sigma-1)\sigma^\prime) - (\sigma-1)(3\sigma/2 - 1)}{b(k - \sigma^\prime) + \sigma - 1} \right\rfloor \nonumber
\end{eqnarray}
where $\sigma^\prime = \sigma(1 + \epsilon(b,n))$ which completes the proof. Note that if $b \rightarrow \infty$ we will have $\sigma^\prime \rightarrow \sigma$ and hence equation (\ref{eqn3}) becomes identical to equation (1) of \cite{Paterson2013} i.e.,
\begin{eqnarray}
n \leq k + \sigma + 1 + \left\lfloor \frac{\sigma(\sigma-1)(b-1)}{(k-\sigma)b+\sigma-1} \right\rfloor
\end{eqnarray}
except the term $(\sigma-1)(\sigma/2-1)$. This term is essentially what makes the upper bound  improved (tighter).
\end{proof}

There are two cases that are interesting to consider for understanding the asymptotical performance. First, if $b$ tends large we will have $\sigma^\prime \rightarrow \sigma$. Hence,
\begin{eqnarray}
n &\leq& k + \sigma - 1 + \left\lfloor \frac{(\sigma-1)\sigma}{k - \sigma} \right\rfloor - \textbf{1}_{(k-\sigma)|(\sigma-1)\sigma} \label{inequality}
\end{eqnarray}
where $\textbf{1}_{A}$ is logical one if $A$ is true, otherwise it is zero. This indicator function is used due to the flooring operation and $\sigma$ only equals to $\sigma^\prime$ in the limit. Thus, if the code becomes array MDS in the limit, there remains no dependence of $n$ on $b$. On the other hand, if we let large but fixed $b \leq k$, and if $k$ gets large, we shall have
\begin{eqnarray}
n \leq k + \sigma^\prime - 1 
= k + \sigma(1 + \epsilon(b,n)) - 1 \label{eqn9}
\end{eqnarray}
which can be made arbitrarily large if we choose an appropriate $\epsilon(b, n) > 0$ for a fixed $b$ and large $n$. This essentially demonstrates that as the array BP-XOR code becomes near-optimal in terms of recovery performance, the upper bound on the number of code columns $n$ can dramatically be improved. 

Although the desirable properties of the coding overhead are found, we still need specific constructions to quantify or bound the coding overhead and hence present tighter bounds on $n$ (and $r$) for a specific construction. Based on this observation, we shall present a code construction method that uses the result of Theorem \ref{Thm21} and has an appropriate $\epsilon(b,n)$ with the properties as summarized with the following remark.

\begin{definition}
For a given fixed code rate $r=k/n$ and $n$, let $\epsilon(b,n)$ to be the maximum coding overhead of $\mathcal{C}^a$ satisfying $b^\prime = (1 + \epsilon(b,n))b$. The asymptotically optimal (MDS) overhead property implies that as $b\rightarrow \infty$ we have $\epsilon(b,n) \rightarrow 0$.
\end{definition}

\subsection{Discrete Geometry Constructions of Asymptotically-MDS array BP-XOR codes}
\label{SectionMoj}

In this section, we will introduce a particular construction of asymptotically MDS array BP-XOR codes based on discrete geometry \cite{Moj} and show that they can be regarded as a special type of the class of asymptotically MDS BP-XOR codes.

The discrete geometry construction is also known as Mojette Transform (MT) codes which are based on discrete version of Radon Transform \cite{Radon}, and can be used to generate redundancy not just for rectangle two dimensional data grid but also for any convex data grid. In  our study, we consider matrix (rectangle) data for simplicity and let encoder compute a linear set of projections at angles specified by a couple of coprime integers $(p,q)$ (with $gcd(p,q)=1$) from a $k \times b$ discrete data structure $f:(z,l) \rightarrow \mathbb{N}$. Suppose that we generate $n$ projections with parameters $\{(p_i, q_i), 0 \leq i \leq n-1\}$. The length of the projection $i$, denoted by $b_i$, is a function of angle parameters $(p_i,q_i)$ and the data grid size $k \times b$. It can be expressed in a closed form as follows \cite{Moj},
\begin{eqnarray}
b_i =  |p_i| (k-1) +  |q_i| (b-1) + 1
\end{eqnarray}

\begin{figure}[t!]
\centering
\includegraphics[angle=0, height=70mm, width=110mm]{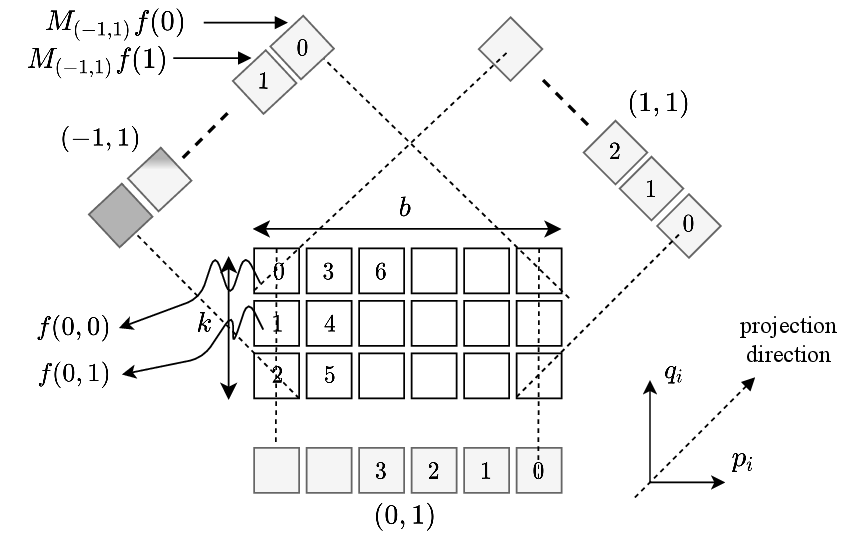}
\caption{A simple illustration of the projection concept and Mojette Transform coding with three projections with parameters $(-1,1),(0,1),(1,1)$. }\label{fig:BSW}
\end{figure}

Note that in this construction, generated projections can be treated as the columns of the asymptotically-MDS BP-XOR code.  Each bin or symbol of the $i$-th projection, based on $(p_i,q_i)$, can be computed as given by the following compact formulation
\begin{align}
& M_{(p_i,q_i)}f(m + (b-1)q_iu(q_i) +(k-1)p_iu(p_i)) \nonumber \\ & \ \ \ \ \ = \bigoplus_{z=0}^{b-1}\bigoplus_{l=0}^{k-1} f(z,l) \delta_{m  + zq_i + lp_i}\label{moj1}
\end{align}
for all $m$ satisfying the inequality,
\begin{gather*}
- (b-1)q_iu(q_i) -(k-1)p_iu(p_i) \\
 \leq m \leq \\ b_i  - (b-1)q_iu(q_i) -(k-1)p_iu(p_i) - 1
\end{gather*}
where $M_{(p_i,q_i)}$ is the transformation operator acting on $f$,  $\bigoplus$ stands for Boolean XOR operation, $u(.)$ is the discrete unit function and  $\delta_i$ is Kronecker delta function which are respectively given by
\[
u(t)=
\begin{cases}
1, & \textrm{if } t > 0 \\
0, & \textrm{Otherwise }
\end{cases}
,\mathrm{and} \ \ \
\delta_i=
\begin{cases}
0, & \textrm{if }i\not=0 \\
1, & \textrm{if }i = 0
\end{cases}
\]
An example code with parameters $k = 3$, $b = 4$ with $n = 3$ projections with parameters $(-1,1),(0,1),(1,1)$ is shown in Fig. \ref{fig:BSW}. Also shown in the same figure transformed symbols $M_{(-1,1)}f(0), M_{(-1,1)}f(1), \dots$ etc. which are the symbols of the projection with $(p_i,q_i) = (-1,1)$. MT codes can be decoded using BP algorithm and the exact reconstruction of user data matrix is possible if the projection parameters $(p_i,q_i)$ are selected judiciously according to the following Katz criterion.

\begin{theorem}[Katz Criterion\cite{katz}] \label{katz}
For a given asymptotically-MDS BP-XOR code defined by $n$ projections on a $k \times b$ data matrix where only $t \in\{1,\dots,n\}$ projections with parameters $(p_i, q_i)$ are available. Exact data reconstruction is possible using iterative BP if
\begin{align}
\sum_{i=0}^{t-1} |p_i| \geq b \textrm{ or } \sum_{i=0}^{t-1} |q_i| \geq k
\end{align}
\end{theorem}

\begin{proof}
This can be interpreted as the reconstruction of a rectangle grid using inverse Mojette transformation of projections \cite{Kingston2014}. It is not hard to see that this reconstruction technique is identical to the belief propagation algorithm for erasure recovery which was applied to discrete tomography and image reconstruction in the past \cite{gouillart2013}. The condition that ensures reconstruction is known as the Katz criterion where the full proof can be found in \cite{katz}.
\end{proof}

\begin{theorem}
If $\sigma_i, i \in \{1,2,\dots,n\} $  denotes the maximum degree of the $i$th projection defined by the parameters $(p_i, q_i)$. We have $\sigma_i = \min\{\lceil b/|p_i| \rceil, \lceil k/|q_i| \rceil \}$ and hence $\sigma = \max_i\{\sigma_i\}$.
\end{theorem}

\begin{proof}
 Considering the equation (\ref{moj1}) and the worst case scenario, we would like to find the number of $l$ and $z$ values such that $zq_i + lp_i = -m$. It is not hard to see that the maximum number of $z$ values that can satisfy this equation is given by $\lceil k/|q_i| \rceil$ due to  $0 \leq z \leq k-1$. Similarly, the maximum number of $l$ values  that can satisfy this equation is given by $\lceil b/|p_i| \rceil$ due to  $0 \leq l \leq b-1$. Since the number of possibilities for $z$ and $l$ are also constrained by the two dimensional rectangle shape, we have the maximum encoding symbol degree equal to the minimum of the two i.e., $\sigma_i = \min\{\lceil b/|p_i| \rceil, \lceil k/|q_i| \rceil \}$. Thus, the maximum degree of all the code symbols is given by the maximum degree of all the projections i.e., $\sigma = \max_i \{ \min\{\lceil b/|p_i| \rceil, \lceil k/|q_i| \rceil \} \}$. 
\end{proof}

Next, we quantify the coding overhead for MT-based asymptotically MDS BP-XOR codes by considering $k=\sigma$ and $k > \sigma$ cases separately. 

\subsubsection{ Case $k = \sigma$} First of all, note that depending on the choices of $(p_i, q_i)$, the coding overhead as well as the maximum degree of the code can change. Although there are multiple choices for  $k = \sigma$, we provide the typical choice below that also ensures good coding overhead.

\begin{construction} \label{Cons33}
Let us consider the following choice of coprime integers,
\begin{align}
q_i = 1, p_i &\in \mathfrak{T} = \left\{-\left\lfloor\frac{n-1}{2}\right\rfloor,\dots,-1,0,1,2,\dots,\left\lceil\frac{n-1}{2}\right\rceil\right\} \label{option1}
\end{align}
where $\mathfrak{T}$ is known as canonical enumeration of integers \cite{OEIS} that goes with the name \emph{A007306} and satisfies $gcd(p_i, q_i) = 1$ for $i=0,\dots,n-1$. 
\end{construction}

Note that this construction satisfies the Katz criterion simply because collecting any $k$ projections will lead us to have $\sum |q_i| = k$. If we use the coprime integers as given by the Construction \ref{Cons33}, we have $q_i$ never equal to zero and $\sigma_i = \min\{\lceil b/ \lceil (n-1)/2 \rceil,k\}$. We note that we have $\sigma = k$ for  $b \gg 1$. We next quantify the coding overhead for this particular construction and show the asymptotically optimal property.

\begin{theorem} \label{Thm37}
For the Mojette code with parameters as given in Construction \ref{Cons33}, for $b \gg 1$, we have
\begin{eqnarray}
\epsilon(b,n) \leq \frac{n(2-r)(nr-1)}{4b} + o(1)
\end{eqnarray}
where $r$ is the fixed rate of the array BP-XOR code.
\end{theorem}

\begin{proof}
See appendix \ref{AppendixA} for the proof of this theorem.
\end{proof}

\begin{figure*}
  \centering
  \includegraphics[width=0.5\columnwidth, height=6.5cm]{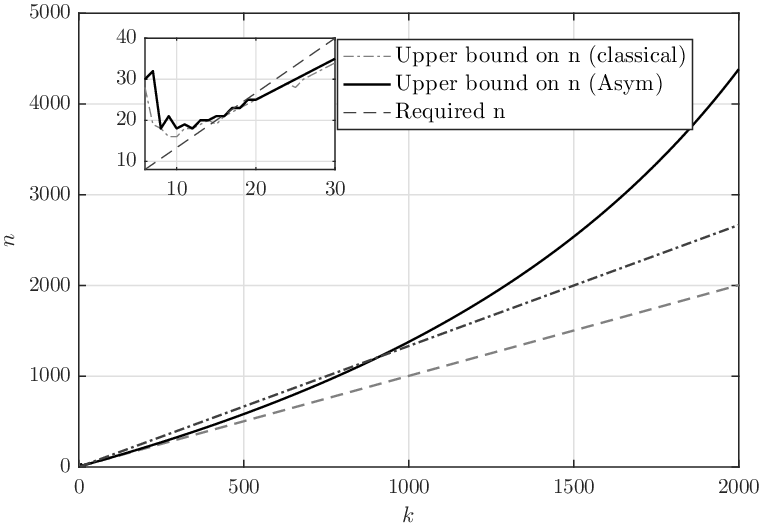}
  \includegraphics[width=0.5\columnwidth, height=6.5cm]{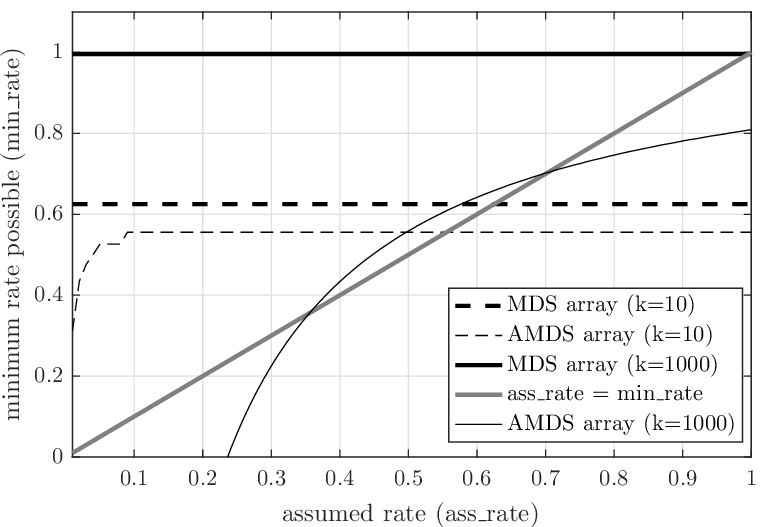}
  \caption{a) Upper bounds on the parameter $n$ are shown as a function of $k$ for rate $3/4$. b) Achievable code rates for different $k$ values are demonstrated. min\_rate: minimum rate possible, ass\_rate: assumed rate.}
  \label{fig3}
\end{figure*}

For fixed $r$ and $k$ (i.e., fixed $n$), if $b \rightarrow \infty$ then it is clear that $\epsilon(b,n) \rightarrow 0$ proving the asymptotical property. On the other hand, for fixed $r$ and $b$, if $n \rightarrow \infty$ then we have $\epsilon(b,n) \rightarrow \infty$. In fact, it is not hard to see that $\epsilon(b,n) = O(n^2)$. Therefore, due to these desirable properties of the overhead and considering the inequality (\ref{eqn9}), we can make $n$ arbitrarily large. Particularly we can find the following lower bound on $n$ for $k = rn = \sigma$ and $r < 1/2$,
\begin{eqnarray}
n \leq rn + rn \left(1 + \frac{n(2-r)(nr-1)}{4b} \right) - 1
\end{eqnarray}
which yields the inequality
\begin{eqnarray}
n-2nr \leq \frac{n^3r^2(2-r)}{4b} 
\Rightarrow n \geq \sqrt{\frac{4b(1-2r)}{r^2(2-r)}}
\end{eqnarray}

This final lower bound shows that the value for the block length $n$ can be arbitrarily large for judiciously selected large $b$. Note that the case $k = \sigma$ has the least constraint on the code block length for any MDS array BP-XOR code. However, in the case of fixed $r$ the complexity could be prohibitive due to large $k$. In that sense, the case $k > \sigma$ is more interesting for the class of asymptotically MDS array BP-XOR codes. 

\subsubsection{ Case $k > \sigma$}
With classical array BP-XOR codes, the block length $n$ is constrained by the following upper bound for $b \gg 1$,
\begin{eqnarray}
n \leq k + \sigma -1 + \left\lfloor \frac{\sigma(\sigma-1)}{k-\sigma} \right\rfloor - \textbf{1}_{(k-\sigma)|(\sigma-1)\sigma}
\end{eqnarray}
which is the same for asymptotically MDS array BP-XOR codes as mentioned in Section II. However, as the block length gets large as well, we shall no longer have constraints on the size of the block length for asymptotically MDS BP-XOR codes. 

Next, we provide another set of parameters for Mojette code that shall satisfy $k > \sigma$. The possibilities of the pair $(p_i, q_i)$ selection for making $k > \sigma$ is not unique. We will consider the typical class as given in Construction \ref{cons35}.

\begin{construction} \label{cons35}
Let us consider the following choice of coprime integers for $n$ projections, 
\begin{align}
\log_2(q_i) &= q_e, \nonumber \\
p_i &\in \mathfrak{U} = \left\{\left\lceil-n+1\right\rceil_{odd},\dots,-1,1,3,\dots,\left\lceil n-1 \right\rceil_{odd}\right\} \label{option2}
\end{align}
where $q_e$ is a positive even number, and $\left\lceil . \right\rceil_{odd}$ rounds to the next biggest odd number of the argument, respectively. 
\end{construction}

Note that using construction \ref{cons35}, it is easy to verify $GCD(p_i, q_i) = GCD(p_i, 2^{q_e}) = 1$ for non-negative $q_e$. Also, we have $k > \sigma = \max_i \{ \min\{\lceil b/|p_i| \rceil, \lceil k/|q_i| \rceil \} \} = \lceil k2^{-q_e} \rceil$. It is of interest to quantify the coding overhead to be able to find the upper bounds on the code block length. 

\begin{theorem} \label{thm39}
For the Mojette code with parameters as given in construction 3.5, for $b \gg 1$, we have
\begin{align}
\epsilon(n,b)  &\leq \frac{\lceil k2^{-q_e} \rceil}{kb}
\Big((k-1) \left(n - \frac{\lceil k2^{-q_e} \rceil}{2} \right) \dots  \nonumber \\ 
& \ \ \ \ \ \ + (b-1)2^{-q_e} + 1 \Big) - 1 + o(1) \nonumber
\end{align}
where $q_e$ is a positive integer, and $\left\lceil . \right\rceil_{odd}$ rounds to the next biggest odd integer of the argument, respectively.
\end{theorem}

\begin{proof}
See Appendix \ref{AppendixB} for the proof of this theorem.
\end{proof}

Note that as long as $q_e | \log_2(k)$, we have $\epsilon(n,b) \rightarrow 0$ for large $b$ demonstrating the asymptotically optimal overhead property. Similarly, for fixed $r$ and $b$, if $n \rightarrow \infty$ then we have $\epsilon(n,b) \rightarrow \infty$. Finally, using equation (\ref{eqn9}) we can express the upper bound on $n$ as follows,
\begin{equation} 
n \leq k +  \frac{\sigma \lceil k2^{-q_e} \rceil}{kb}
\Big((k-1) \left(n - \frac{\lceil k2^{-q_e} \rceil}{2} \right)  + (b-1)2^{q_e} + 1 \Big)  - 1
\end{equation}

We provide some numerical results that compute the upper bounds for comparison. Let us set $q_e = 1 > 0$ and assume a large $b$ value, such as $b=10000$ (this choice is completely arbitrary) and compare the upper bounds on $n$ with using classical MDS array BP-XOR codes and their asymptotically optimal version proposed in our study, abbreviated as AMDS. We present results in Fig. \ref{fig3} in which a)  demonstrates that classical MDS array BP-XOR codes are only possible for very small values of $k$. On the other hand, although the same is true for asymptotically MDS BP-XOR codes for small $k$, it is also observed that for large enough $k$ our bounds are larger than the required $n$ (fixed by the code rate), allowing possible constructions to achieve the corresponding code rate asymptotically. In Fig. \ref{fig3} b), we present the possible minimum code rate (due to the upper bound on the block length) as a function of assumed nominal code rate for different coding schemes. The region above all curves represent achievable code rates. As can be seen, with increasing $k$, AMDS provides more freedom in choosing the right code rate. Fig. \ref{fig3} a) also presents the upper bound behavior for small $k$ on the top-left corner. The plot includes a curve ``required $n$" to denote the required value for $n$ for the corresponding code rate $r=k/n$. 

In order to see clearly the range of rates that are possible with both constructions, the same figure b) depicts the minimum rate that is possible as a function of the assumed rate. Note that with asymptotically MDS array BP-XOR codes, the upper bound on $n$ depends on the coding overhead which is a function of the code rate. Thus, the minimum code rate changes as the assumed code rate changes. For each assumed rate, we calculate the upper bound and then compute the minimum code rate possible. With respect to classical MDS BP-XOR codes, since the upper bound does not change with varying assumed code rates (since the coding overhead is always zero), the curves are flat. As can be observed, the region that lies above the curves is the possible code rates. However, there is no guarantee each and every assumed rate would be achievable. However, as can be seen for large $k$, it becomes impossible to construct classical array MDS BP-XOR codes with a rate smaller than unity. In contrast, by relaxing the exact MDS condition, we can improve the region of achievability.

\section{End-to-End Latency Performance Analysis}
\label{LatencyPerf}

In this section, we shall primarily focus on the total computational latency of the proposed coded system. Then, we will shortly touch upon the communication cost and compare it with other well--known coded computation schemes in the literature. 

\subsection{Computation Latency} 
Similar to the past studies, our time analysis also focuses on exponential task time for each processor. More specifically, we choose the most basic operation to be the ``long dot product" operation ($s$ multiplications and $s-1$ additions for a typically large $s$) in our system, distributed exponentially with parameter $\mu$ i.e., having the probability density function $f(t) = \mu e^{- \mu t}$ and the cumulative distribution function (cdf) $F(t) = 1 - e^{-\mu t}$. If $i^{th}$ processor of a cluster node performs $\sigma$ such dot products, then its cdf will be $F(t/\sigma)$, a scaled version of the original distribution \cite{Lee2016}. Also, the processing power of each $\rho$ master node processor is $c$ times greater than that of the compute cluster which makes the master processor computation rate to be $c\mu$. The parameter $c$ is referred to as \textit{compute factor} for the rest of our discussion.  To minimize the workload of processors and maximize the parallelization, we assume $m \geq b$ and $b < k$ for the rest of our discussion.

For a given group of $b$ processors, the $l^{th}$ order statistics of $(T_1,\dots,T_b)$ is represented by $T_{l:b}$. The expected value of the maximum of $T_i$s each distributed exponentially with rate $\mu$ can be shown to be $\mathbb{E}[T_{b:b}] = H_b/\mu$ where $H_b = \sum_{j=1}^b 1/j$ is the $b^{th}$ harmonic number. Similarly, the expected value of the $l^{th}$ order statistics of $b$ exponential random variables of rate $\mu$ is $\mathbb{E}[T_{l:b}] = (H_b-H_{b-l})/\mu$. Note that for sufficiently large $b$, we have the approximation $H_b \approx \log(b)$.

In the uncoded case, the average latency characterization is straightforward. Since there is no encode/decode operation at the master, all it takes to compute the product is to distribute $kb$ dot products over $kb$ processors and collect the result for a successful merge. In that case, the slowest processor output will determine the expected latency for the overall product computation i.e., $\mathbb{E}[T_{uncoded}] \approx (1/\mu)\log (kb)$ for large $b$. The following theorem characterizes the asymptotical computation time of both encoding/decoding and parallel task completion for single dimensional $[nb,kb]$ MDS polynomial codes scattered across the compute cluster introduced in Fig. \ref{fig:cc1}. Throughout this section, we assumed $s=m$ without loss of generality.

\begin{theorem} \label{Thm31}
Let us use $[nb,kb]$
Polynomial code \cite{Yu2017} to distribute computation over $nb$ processors, the asymptotic latency ($b \rightarrow \infty$, $b < k$) is given by
\begin{align}
\mathbb{E}[T_{poly}]   \gtrapprox  \frac{2nb^2}{c\rho\mu}\log (\rho) + \frac{kb\log^2(kb)}{c\rho\mu} \log (\rho) + \frac{1}{\mu}  \log\left(\frac{n}{n-k}\right) 
\end{align}
where $c$ is the compute factor of the system.
\end{theorem}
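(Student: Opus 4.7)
The plan is to split the latency bound into the three parts of the expression \eqref{eqnTtotal} that the paper retains, namely $T_e + \min_{S\in\mathcal{S}}\max_{i\in S}T_i + T_d$, estimate each expectation separately under the ``long dot product is $\text{Exp}(\mu)$'' model of the section, and then sum them.

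First I would handle the parallel-worker term. Under the $[nb,kb]$ polynomial code each of the $nb$ worker processors computes exactly one encoded dot product whose runtime is $\text{Exp}(\mu)$, and the code has recovery threshold $kb$, so the time to reach a decodable subset is the $(kb)$-th order statistic $T_{kb:nb}$ of $nb$ i.i.d.\ exponentials. Invoking the standard identity $\mathbb{E}[T_{kb:nb}]=(H_{nb}-H_{(n-k)b})/\mu$ stated in the preceding paragraph, together with $H_N-H_M\approx\log(N/M)$ (valid as $k\to\infty$ provided $n-k$ stays of the same order as $n$), the factors of $b$ cancel and I get $(1/\mu)\log(n/(n-k))$, which is the second summand in the claim.

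Next I would account for the master-side work in units of the same ``long dot product'' primitive. Encoding in the polynomial scheme amounts to forming $nb$ linear combinations of the row-blocks of $\mathbf{A}^\intercal$ and $nb$ linear combinations of the column-blocks of $\mathbf{B}$, and a careful count of these combinations (each involving on the order of $b$ long-vector operations per output block) yields a total encoding cost on the order of $2nb^2$ units. Decoding reduces to polynomial interpolation at $kb$ evaluation points, whose fast-algorithm cost is $O(kb\log^2(kb))$ units. Denote the combined master workload $W=2nb^2+kb\log^2(kb)$. Following the same modelling convention used for workers, once $W$ is split evenly across the $p$ master processors (each $c$ times faster, hence of rate $c\mu$ per dot product), each master processor's completion time is $\text{Exp}(pc\mu/W)$, and the time to finish the encode/decode stage is the maximum of $p$ such variables, with expectation $H_p\cdot W/(pc\mu)\approx \log(p)\cdot W/(pc\mu)$. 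This reproduces the first summand of the theorem after substitution of $W$.

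Adding the two contributions gives the stated asymptotic. The main obstacle I expect is the bookkeeping behind the $2nb^2$ encoding estimate: one must fix a convention for how the polynomial-code coefficients act on the row/column partitions of $\mathbf{A}$ and $\mathbf{B}$ and then reconcile the count of scalar multiplications and vector additions against the coarse ``long dot product'' unit, so that neither the $k$-dependence of encoding $a_j$ nor the $b$-dependence of encoding $b_j$ is lost in the approximation. A secondary, lighter technicality is to verify that the residual harmonic-number correction $H_{nb}-H_{(n-k)b}-\log(n/(n-k))=O(1/((n-k)b))$ is negligible in the $k\to\infty$ regime, so that the ``$\approx$'' in the theorem is genuinely justified by the underlying exponential order-statistics calculation.
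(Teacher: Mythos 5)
Your proposal follows essentially the same route as the paper's proof: the same decomposition into encode/parallel/decode terms, the same count of $\approx 2nb^2$ sequential dot products for encoding and $kb\log^2(kb)$ fast-interpolation operations for decoding, the same exponential-order-statistics argument giving $(H_{nb}-H_{(n-k)b})/\mu\approx\frac{1}{\mu}\log\bigl(\tfrac{n}{n-k}\bigr)$ for the worker term, and the same modelling of the $p$ master processors (rate $c\mu$, max-of-$p$ giving the $\log(p)$ factor). The paper's encoding count is only slightly more concrete — it notes each $a_j$ and $b_j$ requires $(b-1)$ dot products and that the encoder handles $nb-1$ processors, giving $2(b-1)(nb-1)\approx 2nb^2$ — but your accounting is the same to leading order, and the two "obstacles" you flag at the end are exactly the bookkeeping the paper glosses over in the same spirit.
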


\begin{proof}
Note that in general $a_j$s and $b_j$s (in our previous example) contain $(b-1)$ dot products each. The encoder performs these dot products for each processor except one (total of $nb-1$), giving us a total of $2(b-1)(nb-1) \approx 2nb^2$ dot products executed in a sequential manner. Decoding is based on the interpolation of a polynomial of degree $kb-1$ and the best known algorithm to solve this is on the order of $kb\log^2(kb)$ operations \cite{Kung}. Although operations are likely to be more than a dot product and not fully parallelizable, we estimate the complexity in this particular way to target the most favorable scenario, giving the competitors the best chance of winning. Note also that the master processing is exponentially distributed with parameter $c\mu$ due to independence and there are $\rho$ processors in operation bringing up the $\log(\rho)$ factor in the expression. On the other hand, the parallel executions perform only single dot product and any $k$ column collections i.e., $kb$ executions will suffice to recover the multiplication result, amounting to an expect delay of $\approx \frac{1}{\mu}\log(\frac{n}{n-k})$. Adding the expected encoding/decoding time and the parallel task time, the result follows.
\end{proof}

Although polynomial codes provide order optimal parallel task time, the encode/decode time shall be the bottleneck for the overall performance if $c$ (and it practically) does not scale with the increasing matrix sizes.  Later studies have shown that \textit{MatDot} codes can further improve the recovery threshold at the expense of worse $T_{sm}$ performance \cite{Fahim}. The following theorem characterizes its overall computation performance.

\begin{theorem}Let us use $[nb,kb]$ MatDot code to distribute computation over $nb$ processors, the asymptotic latency ($b \rightarrow \infty$, $b < k$) is given by
\begin{align}
\mathbb{E}[T_{MatDot}]  \gtrapprox& \frac{2nb^2}{c\rho\mu} \log (\rho) + \frac{kb(k+b)\log^2(k+b)}{c\rho\mu} \log (\rho)  \dots \nonumber \\ 
& \ \ \ \ +  \frac{1}{\mu}  \log\left(\frac{n}{n-k/b-1}\right) 
\end{align}
where $c$ is the compute factor of the system.
\end{theorem}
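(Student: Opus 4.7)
The plan is to mirror the three-term decomposition used in the proof of Theorem \ref{Thm31}, namely $\mathbb{E}[T_{MatDot}] \approx \mathbb{E}[T_e] + \mathbb{E}[\min_S \max_{i \in S} T_i] + \mathbb{E}[T_d]$, and to identify each summand with the appropriate piece of the claimed expression. The encoding term carries over essentially verbatim from the polynomial-code argument: each of the $nb$ workers needs an encoded $\mathbf{A}$-slice and $\mathbf{B}$-slice formed as a linear combination of the inner-dimension splits of $\mathbf{A}$ and $\mathbf{B}$, each such combination involves $\approx b-1$ partial sums, and summing $(b-1)(nb-1) \approx nb^2$ contributions for each of the two matrices yields $2nb^2$ long-dot-product equivalents. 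These are executed sequentially over $p$ master processors of rate $c\mu$, and the standard $\log p$ factor comes from the maximum of $p$ i.i.d. exponential completion times, giving the $2nb^2 \log(p)/(cp\mu)$ contribution to the first bracket.

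For the parallel-task term, the plan is to invoke the recovery threshold of the MatDot construction in our hierarchical deployment. MatDot with split parameter $m$ along the inner dimension $s$ recovers $\mathbf{A}^\intercal \mathbf{B}$ from any $2m-1$ worker evaluations of the product polynomial $p_A(x) p_B(x)$. Choose $m$ so that this threshold equals $k$ (i.e., $m \approx k/2$) and distribute one evaluation per processor across the $nb$ cluster processors; the expected wait for the fastest $k$ out of $nb$ i.i.d. $\mathrm{Exp}(\mu)$ completions is $(H_{nb} - H_{nb-k})/\mu$, which for large $k$ collapses to $(1/\mu)\log(nb/(nb-k)) = (1/\mu)\log(n/(n-k/b))$, matching the second term exactly.

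The decoding term is where the proof has to diverge from Theorem \ref{Thm31} and where the $k^2 b$ factor enters. Unlike polynomial codes, where all $kb$ scalar outputs are coefficients of a single polynomial of degree $kb-1$, each MatDot worker returns a full $k \times b$ matrix obtained by evaluating a matrix-valued polynomial of degree $\approx 2m-2 \approx k$. Consequently the decoder must perform $kb$ independent univariate interpolations, each of degree $\approx k$. Charging each interpolation the best-known $O(k \log^2 k)$ cost of \cite{Kung} gives $kb \cdot k \log^2(k) = k^2 b \log^2(k)$ operations, which when again distributed over the $p$ master processors with the same $\log(p)/(cp\mu)$ scaling yields the remaining piece of the first bracket. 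Adding the three contributions produces the claimed asymptotic expression.

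The main obstacle I expect is a careful bookkeeping of the split parameter $m$ and the per-processor payload so that the two key accounting statements are consistent: that the appropriate order statistic is the $k$-th of $nb$ exponentials (rather than, say, a $\log b$ correction for waiting on a full node), and that the decoding work can legitimately be treated as $kb$ parallel interpolations of degree $k$ rather than one interpolation of higher degree. Once these two structural choices are justified, the remaining steps are routine adaptations of the Theorem \ref{Thm31} argument, and the compute-factor $c$ and master parallelism $p$ enter identically.
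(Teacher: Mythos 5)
Your proposal follows essentially the same three-part decomposition and lands on the same terms as the paper's proof: the encoding cost $\approx 2nb^2$ is carried over verbatim from the polynomial-code argument, the parallel-task term comes from a recovery threshold of $\Theta(k)$ (the paper states it as $k+b-1$, you derive $\approx k$ from the MatDot split parameter $m \approx k/2$; both give $\log(n/(n-k/b))$ once $b \ll k$), and the decoding cost $\approx k^2 b \log^2 k$ is obtained from $kb$ entrywise degree-$\approx k$ interpolations at $O(k\log^2 k)$ each. This is the paper's argument with only a cosmetic difference in how the recovery threshold is bookkept, so the approach is the same.
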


\begin{proof}
We realize that the encoding of MatDot codes is very similar to polynomial codes resulting in the same order number of dot products $2nb^2$ executed in a sequential manner at the master. Similarly, decoding is based on polynomial interpolation but unlike polynomial codes, it suffices to collect $k+b-1$ successful processor outputs to reconstruct the multiplication result which reduces the decoder complexity. Hence for $kb$ elements, we have on the order of $kb(k+b)\log^2(k+b)$  operations, again which may not be simple dot product. All encode/decode operations are performed by $\rho$ processors in parallel and hence the $\log(\rho)$ factor. On the other hand, the parallel executions perform only a single outer product which is at least as complex as a single dot product. In fact, it may be assumed that the outer product is at most equal to $b$ dot products. Although the exact complexity figures for the outer product can be incorporated into this expression, a simple lower bound would suffice to demonstrate the degraded total latency performance of MatDot codes compared to its competitors. Finally, since any $k+b-1$ executions will suffice to recover the the original result (a polynomial of degree $k+b-2$), it amounts to an expected parallel processing delay of $\gtrapprox  \frac{1}{\mu}\log(\frac{nb}{nb-k-b+1})$. Adding the expected encoding/decoding and the parallel task time in the same order, the result will follow.
\end{proof}

Using the same line of thought and assuming a single outer product is at most as complex as $b$ dot products, then we can asymptotically upper bound the expected latency as follows
\begin{align}
    \mathbb{E}[T_{MatDot}]  \lessapprox & \frac{2nb^2}{c\rho\mu} \log (\rho) + \frac{kb(k+b)\log^2(k+b)}{c\rho\mu} \log (\rho) \dots \nonumber \\ 
    & \ \ \ \ +  \frac{b}{\mu}  \log\left(\frac{n}{n-k/b-1}\right) 
\end{align}

We realize that even though the parallel task execution time performance of MatDot codes could be better compared to polynomial codes in the most favorable case, its total computation time is worse (in all circumstances) with scaling $k$. In addition, as we shall see later, it has worse communication cost $T_{ms}$ as well to help reduce $T_{sm}$. In both computation schemes however, the encode/decode times seem to be the main latency bottleneck, especially for small $c$ and $\rho$. 

Next, we provide the latency performance of MDS array  BP-XOR codes (AMDS) in the sublinear regime in the size of the product which distributes the encoding operation over the cluster nodes in order to achieve better end-to-end latency performance at the expense of increased bandwidth.

\begin{theorem} \label{Thm43}
Let us use $[n, k, t, b]$ MDS array BP-XOR code over $n$ nodes each with $b$ processors, for a fixed maximum node degree $\sigma$, the asymptotic latency ($k \rightarrow \infty$) is given by
\begin{align}
\mathbb{E}[T_{AMDS}] \approx \frac{\sigma k b}{c\rho\mu} \log(\rho) + \frac{\sigma}{\mu} \log\left(b \right) + \frac{b^{\sigma-1}}{\mu} \sqrt{2\log \left( \frac{n}{t} \right) }
\end{align}
\end{theorem}

\begin{proof}
See Appendix \ref{AppendixC} for the proof of this theorem. 
\end{proof}

In this expression, we primarily note that implication of $\sigma$ being constant is that the number of backup nodes is sublinear in $k$, i.e., $o(k)$. However, the number of workers, i.e., processors can still grow by increasing the parameter $b$. 
We finally note that the bound on $n-k$ given in \eqref{inequality} satisfies the constraint of intermediate order statistics \cite{Lee2017} and the average latency is linear in $k$ achieving the order optimal computation. Needless to point out that in all of these theorems, we assumed moderate $\rho$ and $c>1$ which aligns with the assumption that master nodes are typically more capable. 

Next,  we  provide the  latency  performance  of a class of MDS  array BP-XOR codes  (AMDS)  in   linear  regime  in  the  size of  the  product i.e., $\Theta(k)$. In other words, the number of stragglers increases linearly with $k$ where we assume $(1+\xi)k$ total number of nodes to compute the matrix multiplication for a fixed $\xi>0$. Note that classical MDS array BP-XOR codes cannot achieve $\xi>0$ as $k \rightarrow \infty$ since $k/n \rightarrow 1$. Hence, the following theorem applied to the asymptotical version of MDS array BP-XOR codes only.  

\begin{theorem} \label{Thm44}
Let us use a $[n, k, t, b, b^{\prime}]$ asymptotically MDS array BP-XOR code used over $n = (1 + \xi)k$ nodes for some $\xi>0$. Let also each node to be equipped with $b_i$ processors for $1 \leq i \leq n$, executing $\sigma_i$ dot products at most for a fixed node degree  $\sigma = \max\{\sigma_i\}$. If we define $x_i = \sigma_i \log(b_i)$ and $x_{i_1}$ and $x_{i_n}$  to be minimum and maximum of all $x_i$s, respectively, then the asymptotic latency ($k \rightarrow \infty$) can be upper bounded by
\begin{align}
\mathbb{E}[T_{Asy}] \leq \frac{\sigma k b^\prime}{c \rho \mu}\log(\rho) + \frac{\sigma}{\mu} \log(b^\prime)  + \frac{\overline{\sigma_b}}{\mu} \sqrt{2 \log \left(\frac{1 + \xi}{\xi}\right)}
\end{align}
where 
\begin{align}
    \overline{\sigma}_b =  
    \sqrt{\frac{n(x_{i_n} - x_{i_1})^2}{4} + \frac{1}{n} \sum_{i=1}^n b_i^{2(\sigma_i-1)} },
\end{align}
$b^\prime = b (1 + \epsilon(b,n))$ and $\epsilon(b,n)$ is the maximum coding overhead.
\end{theorem}

\begin{proof}
See Appendix \ref{AppendixD} for the proof of this theorem. 
\end{proof}

We note that with $b_i=b$ and $\sigma_i=\sigma$, this general result will be identical to the result of Theorem \ref{Thm43}. We also notice that for a fixed $b$, asymptotical version has parallel task time of $O(\sqrt{k})$ whereas original version has $O(\sqrt{\log(k^2)})$. This means that although the asymptotical version allows us better flexibility in choosing the number of stragglers in the network, due to unbalanced computation allocation among network nodes, its parallel execution becomes worse. However, the overall execution time is still linear in $k$, achieving the order optimal computation time from an end-to-end perspective. 

\subsection{Communication Costs}

In polynomial codes, after encoding operation takes place in the master node, the generator communicates $s$ symbols for both matrices ($a_j$ and $b_j$) to the processor $j$ to compute $a_jb_j$ for $j\in\{1,2,\dots,nb\}$. Since a total of $nb$ processors are used, it communicates a total of $2snb$ symbols (compare this to the uncoded case where a total of $2skb < 2snb$ symbols are communicated instead). The sink node collects only $kb$ symbols--the number dictated by the recovery threshold (the outcome of dot products) to initiate successful polynomial interpolation. 

In the case of MatDot codes, the generator communicates the same amount of information (in vector form) with the processors i.e., a total of $2snb$ symbols. However, the processors compute and communicate matrices instead of dot products. The decoder only needs to receive $k+b-1$ processor outputs for successful reconstruction, each being a matrix of size $k \times b$ i.e., a total of $kb$ symbols are communicated to the decoder. Hence the number of symbols communicated with the sink for successful decoding is $kb(k+b-1)$ which effectively boosts the $T_{sm}$ term in Eqn. \eqref{eqnTtotal}.

Using MDS array BP-XOR codes, the generator will have to communicate $2\sigma s$  symbols for each processor  in the worst case. Using a total of $nb$ processors, the generator communicates $2\sigma s n  b$ symbols, which boosts the $T_{ms}$ term in Eqn. \eqref{eqnTtotal}. Since for an order  optimal latency performance we typically choose a fixed $s$ and $\sigma \leq k + b - 1$, MDS array BP-XOR code provides better latency characteristics due to utilizing less bandwidth compared to MatDot codes on master-slave (map) link instead of slave-master (reduce) link. Finally, the sink collects at least $kb$ symbols (just like polynomial codes) to initiate the linear-time iterative decoding. 

In the case of asymptotically MDS array BP-XOR codes \cite{Suayb2018}, the generator will  communicate $2\sigma s$  symbols for each processor in the worst case. Using a total of $nb^\prime$ processors, the generator communicates a total of $2\sigma s n  b^\prime$ symbols, which similarly boosts the $T_{ms}$ term in Eqn. \eqref{eqnTtotal}. Finally, the sink collects at least $kb^\prime$ symbols (more than that of MDS array BP-XOR and polynomial codes) to initiate the linear-time iterative decoding. Note that the asymptotically MDS array codes provide more flexibility in terms of coding rate (the number of stragglers), it also leads to $2\sigma s n \epsilon(b,n)$ and $k \epsilon(b,n) b$ more symbols (compared to the non-asymptotic version) to communicate for master-slave and slave-master links, respectively.

\begin{table}[t!] 
\caption{Communication Cost (symbols) for Various Coded Computation schemes}
\label{table3a}
\centering
\begin{tabular}{c|c|c} 
    \hline
    Scheme  &  master-slave (map) & slave-master (reduce) \\
    \hline \hline
    Uncoded   &  $2skb$ & $kb$ \\ \hline
    Polynomial codes   & $2snb$ & $kb$ \\ \hline
    MatDot codes &  $2snb$ & $kb(k+b-1)$\\ \hline
    Exact MDS BP-XOR  &  $2\sigma snb$ & $kb$ \\ \hline
    Asym. MDS BP-XOR  & $2\sigma s n  b^\prime$  & $kb^\prime$ \\ \hline
\end{tabular}
\end{table}

\begin{table}[t!] 
\caption{Simulation Parameters}
\label{table3b}
\centering
\begin{tabular}{c|c}
    \hline
    Parameter  &  Value\\
    \hline \hline
    $b$   &   20\\ \hline
    $c$   &   50\\ \hline
    $\sigma$   &   7\\ \hline
    $\rho$   &   50\\ \hline
    $\mu$   &   1\\ \hline
\end{tabular}
\vspace{-4mm}
\end{table}

\section{Numerical Results}
\label{numerical}

Let us provide the expected end-to-end computational latency of the aforementioned coded computation schemes for matrix--matrix multiplication. We would like to remark that we only consider codes that have the MDS property in this study. This way we ensure that the computation takes place on the same number of compute nodes/processors for a given workload despite low-complexity approaches such as LDPC codes \cite{Maity2019} could be employed to reduce encoding complexity at the expense of distributing computation over many more processors without taking into account the underlying hierarchical computation architecture.  

We employ a Monte Carlo simulation to assess the average end-to-end computation time. Since we assume scaling clusters in our study, we assume $k$ to tend to large values. Unless stated otherwise, the simulation parameters summarized in Table \ref{table3b} are used and we mostly are interested in the range $b < k$. We have also set the number of stragglers to maximum possible i.e.,
\begin{eqnarray}
n - k  = \sigma -1 + \left\lfloor \frac{\sigma(\sigma-1)(b-1)}{(k-\sigma)b+ \sigma -1}\right\rfloor
\end{eqnarray}
to make sure that an appropriate selection (high rate) of an MDS array BP-XOR code can be made. As have been the measure of comparison of the past studies (since \cite{Lee2016}), we plot the expected total computation latency $\mathbb{E}[T]$ as a function of growing $k$ as illustrated in Fig. \ref{fig:fig12} a). In the uncoded case, the master node does not perform any computations and the matrix multiplication is distributed over $kb$ processors. Due to stragglers, its performance is worse than polynomial codes and MDS Array BP-XOR codes (AMDS). Another natural observation is that since encoding/decoding requirements are escalating as $k \rightarrow \infty$, the total computation latency performances of all schemes get worse. However, AMDS demonstrate an order of magnitude better latency performance compared to polynomial codes thanks to its suitable structure that allows low complexity decoding and distributed encoding. This family of codes achieves this performance at the expense of using more bandwidth between the master and slave nodes. Although MatDot codes ensure a better recovery threshold, the decoding complexity makes its end-to-end latency performance worse than its competitors.

\begin{figure*}[t!]
    \centering
    \includegraphics[width=0.5\columnwidth, height=6.5cm]{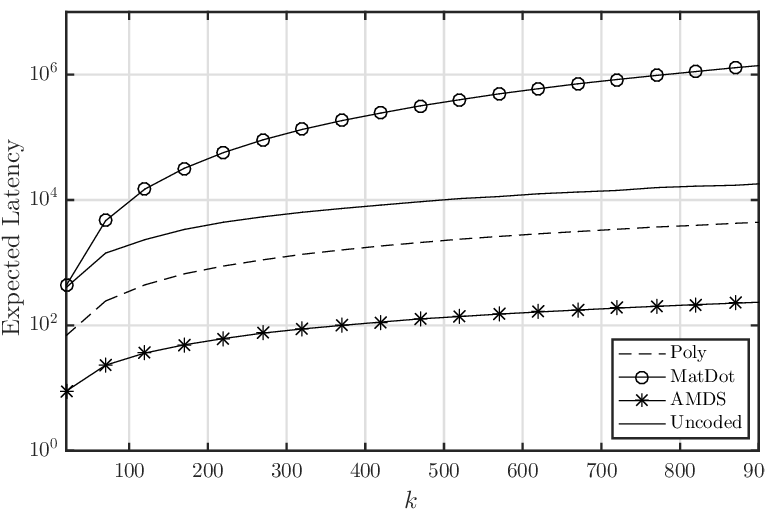}
    \includegraphics[width=0.5\columnwidth, height=6.5cm]{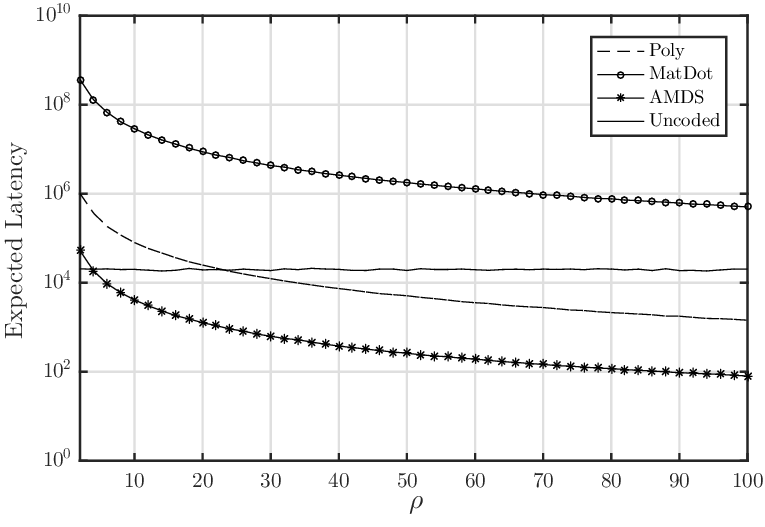}
    \caption{a) Expected total computation time of various computation schemes as a function of $k$. b) Expected total computation time of various computation schemes as a function of $\rho=c$ with fixed $k=1000$.}
    \label{fig:fig12}
    \vspace{-4mm}
\end{figure*}

\begin{figure*}
    \centering
    \includegraphics[width=0.5\columnwidth, height=6.5cm]{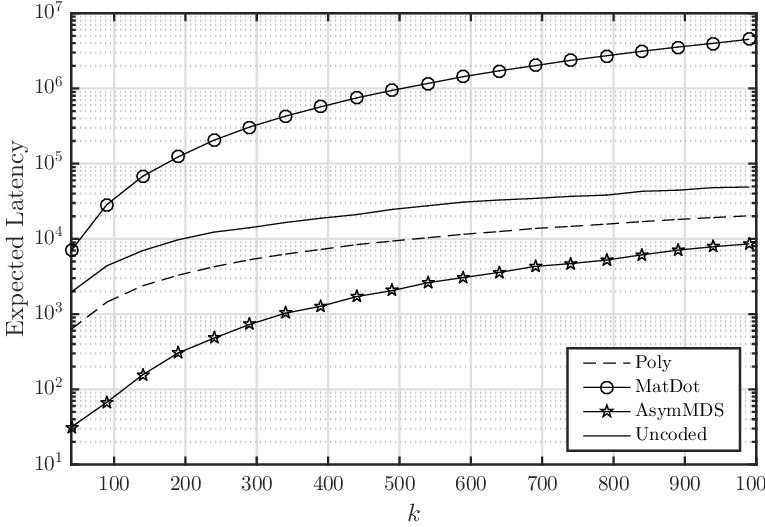}
    \includegraphics[width=0.5\columnwidth, height=6.5cm]{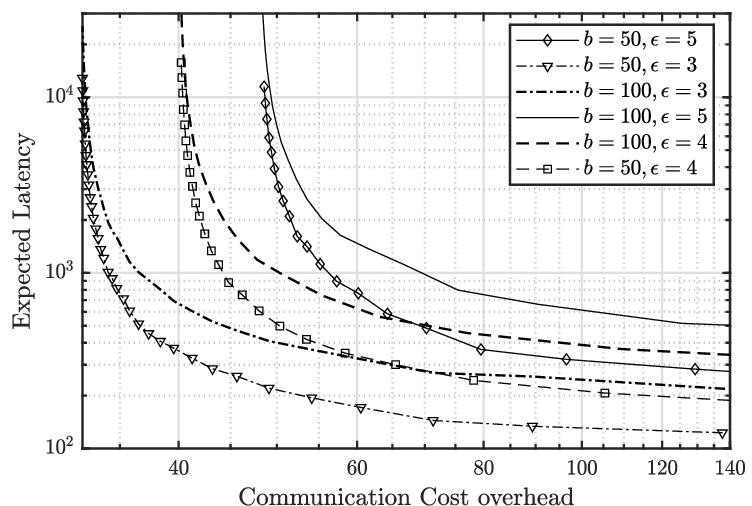}
    \caption{a) Expected total computation time of various computation schemes as a function of $k$ for 10\% straggler ratio. b) Expected total computation time of Asymtotically MDS array BP-XOR codes as a function of communication cost.}
    \label{fig:fig45}
    \vspace{-4mm}
\end{figure*}

In Fig. \ref{fig:fig12} b), we have fixed $k=1000$ and used all values given in Table \ref{table3b} except $\rho$ and $c$. We have set $\rho=c$ and varied both between 2 and 100. Hence, as we go from left to right along the abscissa, the master's parallel computation capability would increase. Note that the uncoded scheme is not affected by the master's computation capabilities as it does not call for any encoding and decoding processes. Also, fixing $k$ implies that the number of stragglers is $\sigma-1=6$ i.e., fixing $n = 1006$ meaning that the code rate $\approx 1$. For a typical choice of $\rho=c=10$, this implies that using only a few extra computations, AMDS coding scheme provides almost five times (20000 versus 4108) better total computation latency compared to the uncoded scheme. 

One of the limitations of the original AMDS code is that the number of straggler tolerance is sublinear in $k$ as the cluster scales. In other words, the previous numerical setting such as $n=1006$ implied that the system is only tolerant to six stragglers since $k=1000$. This might be a quite limitation of the code's usability for real clusters in which the number of stragglers may typically scale with the number of nodes \cite{Schroeder2010}, \cite{Barroso}. 

Let us suppose we have a scaling cluster with a fixed $10\%$ straggler ratio which is shown to be quite typical in high-performance computing systems \cite{Schroeder2010}. Note that in order to generate redundant computation for scaling stragglers using AMDS codes, we need to tolerate extra computation overhead. We plot In Fig. \ref{fig:fig45} a) the expected latency as a function of $k$ for various coding schemes. We have also included uncoded performance as the baseline for comparison.  As can be seen, the asymptotical version (AsymAMDS) provides an order of magnitude better expected latency compared to Polynomial codes for small size clusters ($k \approx 200$). However, as the size of the cluster increases, mainly due to increased coding overhead $\epsilon(b,n)$, the computation overhead becomes a sublinear function of $k$, making the overall encode/decode process a non-linear function of $k$. This is why for large $k$ (for instance $k \approx 1000$), the latency performance gets closer to that of Polynomial codes.

\begin{table*}[t!]
\caption{Comparison and Break-down of the Relative Overall Operation Time of Different Coding Schemes for the Parameter Tuple ($b=100$, $\epsilon=3$, $\rho=c=50$) and An Average Time Unit $\mu=1$.}
\centering
\label{table444}
\begin{tabular}{|c|c|c|c|c|c|c|c|}
\hline
Method                    & $k$ & $t$ & Encode@Master & Decode@Master & Cluster Time & $\mathbb{E}[T]$ & Comm. Cost overhead \\ \hline
\multirow{3}{*}{Polynomial}     & 100  &  37 & 4924.9 & 1524.6 & 1.3 &  6450.8 &  0.37        \\ \cline{2-8} 
                          & 1000  & 27 & 37003.9 & 23844.5 & 3.64 &  60852.04       &  0.027        \\
                          \cline{2-8} 
                          & 10000  & 27 & 361689.5 & 343903.7 & 5.9 & 705599.1       &  0.0027        \\
                          \hline
\multirow{3}{*}{MatDot}   & 100  &  37 & 4925.5 & 100809.7 & 1.46 &  105736.66 &   198.4       \\ \cline{2-8} 
                          & 1000  & 27 & 36917.2 & 9746604 & 1.1 &   9783522.3      &   1098  \\
                          \cline{2-8} 
                          & 10000  & 27 & 361558.4 & 1543655907 & 1 &   1544017466.4      &   10098 
                          \\ \hline
\multirow{3}{*}{AMDS}     & 100  &  6  & 0 & 126 & 51.5 &     177.5      &   10.6       \\ \cline{2-8} 
                          & 1000  & 6  & 0 & 1262.2 & 67.5 &    1329.7    &    8.19   \\ \cline{2-8} 
                          & 10000  & 6  & 0 & 12520 & 83.6 &   12603.6   &    8.02    \\ \hline
\multirow{3}{*}{AsymAMDS} & 100  &  37  &0 & 496.3 & 49.6 &   545.9       &   42.4       \\ \cline{2-8} 
                          & 1000  & 27 & 0 & 4949 & 67.1 &   5016.1       &      32.7  \\ \cline{2-8} 
                          & 10000  & 27 & 0 & 50338.6 & 83.2 &   50421.8       &      32.07   \\ \hline
\end{tabular}
\end{table*}

Generally speaking, the amount of time it takes for a computation to complete isn't the only important factor. We must take communication latency into account, as well as the bandwidth traffic generated during computing. To do so, we present the resulting tradeoff between the end-to-end computation latency and communication cost overhead. To be able to numerically present the communication cost overhead, we divide the total communication cost by the cost of an uncoded case and subtract one. For instance, using AsymAMDS codes, we divide the master-slave communication cost of $2\sigma snb^\prime$ by $2skb$ and subtract one, which results in $\sigma n (1 + \epsilon(n,b))/k - 1$. Following a similar logic, the total communication cost overhead (including both master--slave and slave--master communications) can be given as $\sigma n/k  + (\sigma n/k+1)\epsilon(n,b)$. Since the communication cost of Polynomial codes is minimal and that of MatDot codes scales with $O(k^2)$, we only illustrate the tradeoff curve using AsymAMDS codes for brevity. We use $\epsilon(b,n) \in \{3,4,5\}$ and $b \in \{50,100\}$ by varying $k$ values (and indirectly $t$ due to the bound in \eqref{eqn3}) over the range between 40 and 5000. As can be seen from Fig. \ref{fig:fig45} b), the relationship is inverse and as we allow more communication between the master and slave nodes, we obtain better average access latency performance. One of the observations is that for a fixed $b$, as the coding overhead increases, the trade-off curve shifts right which increases the communication cost overhead. On the other hand, for a fixed coding overhead, as $b$ increases we also observe that expected total computation time increases as well. Hence, we desire minimal $b$ values. However, choosing these values as small as possible would significantly limit the achievable code rate and eventually reduce the percentage of stragglers that can be tolerated in the cluster. 

In Table \ref{table444}, we have finally provided the breakdown of the latency figures between the encode time spent in the master, decoding time and finally the time spent in the cluster nodes. We have assumed $b=100, \epsilon = 3, \sigma =7, c=\rho=50, \mu=1$ and an outer product to be approximately equal to $b$ dot products for MatDot codes. We have conducted over 10000 simulations and reported the averages. As can be seen, a similar trade-off between $\mathbb{E}[T]$ and the communication cost overhead can be observed. Also, we can realize that cluster time with the proposed coding schemes increase compared to Polynomial and MatDot codes due to distributed encoding and multiple ($\sigma$) dot product computations per node. Although AsymAMDS performs worse than AMDS in terms of latency and communication cost, it allows more flexibility for the selection of the right code rate making it suitable for scaling stragglers scenarios.

\section{Conclusion}
\label{conc11}

A fault-tolerant massive matrix product scheme is presented under a realistic hierarchical compute cluster model using MDS array BP-XOR codes. The implications of the limitations on the maximum block length of such codes constrain to a fixed size of stragglers. On the other hand for scaling stragglers, an asymptotic version based on projection geometry is proposed to provide an efficient solution to the massive matrix multiplication process in which stragglers also scale. The proposed scheme has a few novelties: (1) it allows the computation of encoding to be distributed over the cluster nodes at the expense of increased communication cost, (2) it has an extremely efficient decoding process based on pure XOR logic. Furthermore, (3) it can be used as component codes of $d$-dimensional product coding schemes to allow for more powerful coded computations. Finally, (4) due to the iterative nature of decoding, this coding scheme is one of the best candidates for future master-less computation frameworks. Parallelization of the iterative decoding process over the slave nodes is quite possible and will be investigated in later work. One of the other ongoing works is the minimization of the communication cost through intelligent compression, due to offloading encoding operation to cluster nodes.

\appendices
\section{Proof of Theorem \ref{Thm37}} \label{AppendixA}
Let us start by defining the following utility function,
\begin{eqnarray}
\varphi(x) = \left\lfloor\frac{x}{2}\right\rfloor\left(\left\lfloor\frac{x}{2}\right\rfloor + 1\right) \textmd{ for $x$ } \geq 0.
\end{eqnarray}
Also let $I_t = \{0,1,\dots,t-1\}$. Using these definitions, we state the following lemma next.

\emph{Lemma A.1:} For the projection set given as in (\ref{option1}), we have the sum $\sum_{i=0}^{t-1} |p_i|$ that can be expressed in a closed form using the utility function
\[
\sum_{i \in I_t} |p_i| = \frac{1}{2} \left( \varphi(t) + \varphi(t-1) \right) =
\begin{cases}
\frac{t^2-1}{4}, & \textrm{if $t$ is odd} \\
\frac{t^2}{4}, & \textrm{if $t$ is even}
\end{cases}
\]
This lemma can easily be proved by considering $t$ odd and even cases using induction, separately. Note that the integer sequence $\sum_{i \in I_t} |p_i|$ is given by \emph{A002620} \cite{OEIS}. Using this result, for a given pair of projections $t_2$ and $t_1$ satisfying $t_2 > t_1$, with the associated projection parameters $(p_i^{(t_2)}, q_i^{(t_2)} = 1)$ and $(p_i^{(t_1)}, q_i^{(t_1)} = 1)$ selected based on construction \ref{Cons33} (Eqn. (\ref{option1})), we can deduce that
\begin{eqnarray}
\frac{t_2^2 - t_1^2 - 1}{4} \leq \sum_{i=0}^{t_2-1} |p_i^{(t_2)}| - \sum_{j=0}^{t_1-1} |p_j^{(t_1)}| \leq \frac{t_2^2 - t_1^2 + 1}{4} \label{eqnsum}
\end{eqnarray}

Note that since $q_i=1$, it is sufficient to collect $k$ projections for perfect reconstruction. Thus, the upper/lower bounds given in equation (\ref{eqnsum}) are particularly useful if we set $t_2 = n$ and $t_1 = n - k$ to be able find the contributions from the largest $k$  projections in the sum that appears in the worst case coding overhead expression. Let $i^\prime$ be the index such that $p_{i^\prime}^{(t_2)} = p_0^{(t_1)}$ and define the set
\begin{eqnarray}
S = \{i^\prime, i^\prime + 1, \dots, i^\prime + n - k - 1\}
\end{eqnarray}

The worst case coding overhead in this case is given by the following
\begin{align}
\epsilon(n,b) &= \frac{1}{kb} \left( \sum_{i \in I_t \backslash S} |p_i| (k-1) + |q_i| (b-1) + k \right) - 1 \\
&= \frac{(b-1)k + \frac{k-1}{2}\left(\varphi(n) + \varphi(n-1)\right)}{kb} \nonumber  \\
& \ + \frac{-\frac{k-1}{2}\left( \varphi(n-k) + \varphi(n-k-1)\right) + k}{kb}  - 1 \label{eqn100} \\
&= \frac{k-1}{2kb}\left(\varphi(n) + \varphi(n-1) - \varphi(n-k) - \varphi(n-k-1)\right)
\end{align}
where Eqn. (\ref{eqn100}) follows from the Lemma \emph{A.1}.
Again, using Lemma \emph{A.1} and Equation (\ref{eqnsum}), and through some algebra, we can bound the worst case coding overhead as follows,
\begin{eqnarray}
\frac{k-1}{4kb}\left(2kn-k^2 - 1\right)
\leq \epsilon(n,b) \leq \frac{k-1}{4kb}\left(2kn-k^2 + 1\right)
\end{eqnarray}
which can  be accurately approximated for $b \gg 1$ as
\begin{eqnarray}
\epsilon(n,b) \approx \frac{k-1}{4kb}\left(2kn-k^2\right)  = \frac{k-1}{4b}(2n-k)\label{eqn13}
\end{eqnarray}
from which the result follows.

\section{Proof of Theorem \ref{thm39}} \label{AppendixB}

Let us start with the following lemma which shall be useful to prove the theorem. 

\emph{Lemma B.1:} For the projection set given as in (\ref{option2}) with $t$ projections, we have the sum $\sum_{i=0}^{t-1} |p_i|$ that can be expressed in a closed form using the utility function
\[
\sum_{i=0}^{t-1} |p_i| =
\begin{cases}
\frac{t^2+1}{2}, & \textrm{if $t$ is odd} \\
\frac{t^2}{2}, & \textrm{if $t$ is even}
\end{cases}
\]

\begin{proof}
Let us consider the sum for even and odd $t$ separately. First we assume $t$ to be odd. Let us define the set
\begin{align}
\mathfrak{U}_{a} = \left\{\left\lceil-t+1\right\rceil_{odd}-a,\dots,-1-a,1-a,\dots,\left\lceil t-1 \right\rceil_{odd}-a\right\}
\end{align}
and notice that $\mathfrak{T} = \mathfrak{U}_0 \cup \mathfrak{U}_{1}$. Since these sets are disjoint, we have
\begin{eqnarray}
\sum_{i \in \mathfrak{T}} |p_i| = \sum_{i \in \mathfrak{U}_0} |p_i| + \sum_{i \in \mathfrak{U}_{1}} |p_i| = 2\sum_{i \in \mathfrak{U}_{1}} |p_i| + 1
\end{eqnarray}
Using this relationship and the result of Lemma \textit{A.1}, we can express
\begin{align}
\sum_{i \in \mathfrak{U}_0} |p_i| &= \sum_{i \in \mathfrak{U}_1} |p_i| + 1  
= \frac{\sum_{i \in \mathfrak{T}} |p_i| - 1}{2} + 1 \nonumber \\
&= \frac{\frac{1}{2}(\phi(2t)-\phi(2t-1))-1}{2} + 1 = \frac{t^2+1}{2}.
\end{align}

Now let us assume $t$ to be even. For this particular assumption we can rewrite $\mathfrak{T} = \mathfrak{U}_0 \cup \mathfrak{U}_{1} \cup \{t\}$. Using this observation and the result of Lemma \textit{A.1}, we can express
\begin{align}
\sum_{i \in \mathfrak{U}_0} |p_i| &= \sum_{i \in \mathfrak{U}_1} |p_i| = \frac{\sum_{i \in \mathfrak{T}} |p_i| - t}{2} \nonumber \\
&= \frac{1}{2} \left( \frac{(2t+1)^2 - 1}{4} - t \right) = \frac{t^2}{2}
\end{align}
which completes the proof of the lemma.
\end{proof}

According to Theorem \ref{katz}, we need to have $\sum_{i=0}^{t-1}|q_i| = t2^{q_e} \geq k$. This implies $t = \lceil k2^{-q_e} \rceil$ projections are sufficient for perfect reconstruction. For a given pair of projections $t_2$ and $t_1$ satisfying $t_2 > t_1$, with the associated projection parameters $(p_i^{(t_2)}, q_i^{(t_2)} = 2^{q_e})$ and $(p_i^{(t_1)}, q_i^{(t_1)} = 2^{q_e})$ selected based on construction 3.6, we can deduce that
\begin{eqnarray}
\frac{t_2^2 - t_1^2 - 1}{2} \leq \sum_{i=0}^{t_2-1} |p_i^{(t_2)}| - \sum_{j=0}^{t_1-1} |p_j^{(t_1)}| \leq \frac{t_2^2 - t_1^2 + 1}{2} \label{eqnsumX}
\end{eqnarray}

To be able find the contributions from the largest $\lceil k2^{-q_e} \rceil$  projections, we set $t_2 = n$ and $t_1 = n - \lceil k2^{-q_e} \rceil$. Using similar arguments to Appendix A, we can express the worst case coding overhead in this case as follows
\begin{align}
\epsilon(n,b) &= \frac{1}{kb} \left( \sum_{i \in I_t \backslash S} |p_i| (k-1) + |q_i| (b-1) + \lceil k2^{-q_e} \rceil \right) - 1 \\
&= \frac{(b-1)2^{q_e} \lceil k2^{-q_e} \rceil + \frac{k-1}{2}\left(\varphi(n) + \varphi(n-1)\right)}{kb}  \nonumber \\ 
& \ \ \ \ - \frac{\frac{k-1}{2}\left( \varphi(n-\lceil k2^{-q_e} \rceil) + \varphi(n-\lceil k2^{-q_e} \rceil-1)\right)}{kb}  \nonumber \\ 
& \ \ \ \ + \frac{\lceil k2^{-q_e} \rceil}{kb} - 1
\label{eqn10}
\end{align}
Using equation (\ref{eqnsumX}) and $b \gg 1$, we can accurately approximate the worst case coding overhead as follows,
\begin{align}
\epsilon(n,b)   &\leq  
\frac{\lceil k2^{-q_e} \rceil}{kb}
\Bigg((k-1) \left(n - \frac{\lceil k2^{-q_e} \rceil}{2} \right) \dots \nonumber \\
& \ \ \ \ \ \ \ \ + (b-1)2^{q_e} + 1 \Bigg) - 1 + o(1), \nonumber 
\end{align} 
which completes the proof. 

\section{Proof of Theorem \ref{Thm43}} \label{AppendixC}

In the proposed scheme, encoding does not involve any dot products at the master but it uses more bandwidth for (in terms of symbols) communication. Let us first consider the parallel task time in which we need to consider the distribution of order statistics rather than expectation. We initially consider a single network node, then we extend our analysis to expected order statistics for multiple nodes. Unlike expectation, the distribution of order statistics is more challenging. 

For simplicity, we consider asymptotics and recognize that the $j^{th}$ order statistics of $T_i$s i.e., $T_{j:z}$ (for $z$ cores or processors) converges in distribution to a Gaussian if $j/z \rightarrow 1$\footnote{This is also known as intermediate order statistics \cite{Lee2017}.} (where we define below the random variable $Y_i$ for convenience). The distribution of $Y_i$ can be expressed as
\begin{eqnarray}
Y_i := T_{j:z} \xrightarrow{d} \mathcal{N}\left( \frac{\sigma}{\mu} \log\left(\frac{z}{z-j}\right), \frac{z-j+1}{z^2 f^2(\frac{\sigma}{\mu} \log(\frac{z}{z-j}))} \right)
\end{eqnarray}
which shall model the delay for the $i$th cluster node with $b>1$ processors. From the cluster point of view and perfect reconstruction, we need any $k$ nodes completing their assigned task to be able to reconstruct $\mathbf{A}^\intercal \mathbf{B}$.

We recall $i$th expected order statistics from \cite{Royston1982} and through some algebraic manipulations, we can reach at
\begin{align}
\mathbb{E}[Y_{i:n}] \approx \frac{\sigma}{\mu} \log \left( \frac{z}{z-j} \right) + \frac{\Phi^{-1}\left(\frac{i-\alpha}{n-2\alpha+1}\right) \sqrt{z-j+1}}{z f\left(\frac{\sigma}{\mu} \log \left( \frac{z}{z-j} \right) \right)} 
\end{align}
where $\alpha = 0.375$. For $b$ processors, we need to wait until all processors complete their job. Also, we need $k$ nodes to finish their task before reconstruction. Thus, we consider $T_{b:b}$ as the limiting case which leads to 
\begin{align}
\mathbb{E}[Y_{k:n}] \approx \frac{\sigma}{\mu} \log \left( b \right) + \frac{\Phi^{-1}\left(\frac{k-\alpha}{n-2\alpha+1}\right)}{b f\left(\frac{\sigma}{\mu} \log \left( b \right) \right)}  \label{eqn7a}
\end{align}

We note that the cumulative distribution function of the standard normal $\Phi(x)$ can be written as an infinite sum and can be approximated (for $x \rightarrow \infty$)
\begin{align}
\Phi(x) &= 1 - \frac{e^{-x^2/2}}{2x\sqrt{\pi}} \left(1 - \frac{1}{x^2} + \frac{1.3}{x^4} - \frac{1.3.5}{x^6} \cdots \right) \nonumber \\
&\approx 1 - \frac{e^{-x^2/2}}{2x\sqrt{\pi}}
\end{align}
from which it follows that
\begin{align}
\log(1-\Phi(x)) \approx - \log(2x\sqrt{\pi}) - x^2/2 \label{eqn10}
\end{align}

Let us define $x := \sqrt{-2\log(1-y)}$ where $y \rightarrow 1^{-}$ which shall make $x \rightarrow \infty$.  By replacing $x$ in equation \eqref{eqn10}, this would imply that we have
\begin{eqnarray}
\Phi^{-1}(y) \approx \sqrt{-2\log ( 1-y)} = \sqrt{2 \log \left( \frac{1}{1-y} \right)} \label{eqn11}
\end{eqnarray}

Next, we note that $y = \frac{k-\alpha}{n-2\alpha +1} \rightarrow 1^{-}$ as $k \rightarrow \infty$ and can use equation \eqref{eqn11} to approximate 
\begin{eqnarray}
\Phi^{-1} \left(\frac{k-\alpha}{n-2\alpha+1}\right) \approx \sqrt{2\log \left(  \frac{n}{ n - k }\right)} \label{eqn12a}
\end{eqnarray}

Similarly, the denominator in \eqref{eqn7a} can be expressed as $b f\left(\frac{\sigma}{\mu} \log \left( b \right) \right) = \mu b^{1-\sigma}$. Finally, the decoding performs $\approx \sigma k b$ dot products sequentially in the worst case, leading to an average delay of $\frac{\sigma k b}{c\mu}$. However, if we perform these operations in $\rho$ parallel processors, and we need to wait for all operations to finish, we would obtain a delay of $\frac{\sigma k b}{cp\mu}\log(\rho)$. Adding the expected decoding time and the parallel task time with $t = n-k$ due to block MDS property, the result follows.

\section{Proof of Theorem \ref{Thm44}}
\label{AppendixD}

Similar to non--asymptotical version, let us first consider the parallel task time where we shall assume Gaussian approximation for the distribution of order statistics for analytical tractability. We initially consider a single network node, then we extend our analysis to expected order statistics for multiple nodes.

Based on intermediate order statistics i.e., $j/b_i \rightarrow 1$ and using the same notation for the random variable that characterizes the total latency for the $i$th node $Y_i$, we shall have 
\begin{align}
Y_i := T_{j:b_i} \xrightarrow{d} \mathcal{N}\left( \frac{\sigma_i}{\mu} \log\left(\frac{b_i}{b_i-j}\right), \frac{b_i-j+1}{b_i^2 f^2(\frac{\sigma_i}{\mu} \log(\frac{b_i}{b_i-j}))} \right)
\end{align}
which shall model the delay for the $i$th cluster node with $b_i>1$ processors, each executing at most $\sigma_i$ dot products and $\mathcal{N}(.,.)$ denotes the Gaussian distribution, where $\xrightarrow{d}$ means ``converges in distribution". For perfect reconstruction, we need any $k$ nodes completing their assigned task to be able to reconstruct $\mathbf{A}^\intercal \mathbf{B}$.

Through some algebra, the $k$th expected order statistics can be found to be of the form
\begin{align}
\mathbb{E}[Y_{k:n}] \approx  \mu_b + \sigma_b \Phi^{-1}\left(\frac{k-\alpha}{n-2\alpha+1}\right) 
\end{align}
where $\alpha = 0.375$, $\mu_b = \frac{1}{n\mu}\sum_{i=1}^n  \sigma_i \log \left( b_i \right)$ and
\begin{eqnarray}
\sigma_b = \sqrt{\frac{1}{n}\sum_{i=1}^n \left[ \frac{1}{b_i^2 f^2\left(\frac{\sigma_i}{\mu} \log (b_i) \right)} + \left(\frac{\sigma_i}{\mu}\log \left( b_i \right) - \mu_b \right)^2  \right]} \label{eqn18}
\end{eqnarray}
where these results easily follow using the normal analysis for order statistics in \cite{Royston1982} and for general distributions, the upper bound on the expected value of the $k$th order statistics as given in \cite{Arnold1979}. Note that setting $b_i = b$ (and hence $\sigma_i = \sigma$) for all $i$ satisfying $1 \leq i \leq n$ will lead to \eqref{eqn7a}. 

Let $x_i = \sigma_i \log(b_i)$ and $x_{i_1} \leq x_{i_2} \leq \dots \leq x_{i_n}$ be the ordered set with $x_{a} = 0.5(x_{i_1} + x_{i_n})$. Based on this definition, We realize that we can bound the square term in equation \eqref{eqn18} as follows
\begin{align}
\left(\frac{\sigma_i}{\mu}\log \left( b_i \right) - \mu_b \right)^2 \leq \frac{1}{\mu^2} \sum_{i} ( x_i - x_{a})^2 \leq \frac{n(x_{i_n} - x_{i_1})^2}{4\mu^2}
\end{align}
where the latter inequality follows because $|x_i - x_{a}|$ is at most $(x_{i_n} - x_{i_1})/2$. This leads to
\begin{align}
    \sigma_b \leq \frac{1}{\mu} 
    \sqrt{\frac{n(x_{i_n} - x_{i_1})^2}{4} + \frac{1}{n} \sum_{i=1}^n b_i^{2(\sigma_i-1)} }
\end{align}

On the other hand, using Jensen's inequality we can upper bound $\mu_b$ given by the sequence of inequalities
\begin{align}
\frac{1}{\mu} \log\left( \frac{1}{n} \sum_i b_i^{\sigma_i}\right) \nonumber  
    \geq \frac{1}{\mu n} \sum_i  \log(b_i^{\sigma_i}) = \mu_b
\end{align}
which leads to 
\begin{eqnarray}
\frac{\sigma_{\max}}{\mu} \log(b^\prime) \geq \mu_b \geq \frac{\sigma_{\min}}{n\mu} \sum_i \log(b_i)
\end{eqnarray}
where $\sigma_{\max} = \max\{\sigma_i\}$ and $\sigma_{\min} = \min\{\sigma_i\}$. We also use $\sigma = \sigma_{\max}$ to be compatible with the main text of the paper.

Similar to the approximation given in \eqref{eqn12a} for large $k$, we can express
\begin{align}
\mathbb{E}[Y_{k:n}] \approx  \mu_b + \sigma_b \sqrt{2 \log \left(\frac{1 + \xi}{\xi}\right)}
\end{align}

Finally, using the bound derived earlier for $\mu_b$ and $\sigma_b$, we can rewrite
\begin{align}
\mathbb{E}[Y_{k:n}] \leq  \frac{\sigma}{\mu} \log(b^\prime)  + \frac{\overline{\sigma_b}}{\mu} \sqrt{2 \log \left(\frac{1 + \xi}{\xi}\right)}
\end{align}
where 
\begin{align}
    \overline{\sigma}_b =  
    \sqrt{\frac{n(x_{i_n} - x_{i_1})^2}{4} + \frac{1}{n} \sum_{i=1}^n b_i^{2(\sigma_i-1)} }
\end{align}

On the other hand, the decoding performs $\approx \sigma k b^\prime$ dot products sequentially in the worst case, leading to an average delay of $\frac{\sigma k b^\prime}{c\mu}$. However, if we perform these operations in $\rho$ parallel processors, and since we would need to wait for all operations to finish, the latency would be $\frac{\sigma k b^\prime}{cp\mu}\log(\rho)$. Adding the expected decoding time and the parallel task time, the result follows.


\ifCLASSOPTIONcaptionsoff
  \newpage
\fi




\begin{thebibliography}{1}
\bibitem{schroeder2007} B. Schroeder and G. A. Gibson. ``Understanding disk failure rates: What does an MTTF of 1,000,000 hours mean to you?." \textit{ACM Transactions on Storage (TOS)} 3.3 (2007): 8-es.
\bibitem{Schroeder2010} B. Schroeder and G. A. Gibson, ``A Large-Scale Study of Failures in High-Performance Computing Systems," \textit{in IEEE Transactions on Dependable and Secure Computing,} vol. 7, no. 4, pp. 337-350, Oct.-Dec. 2010.
\bibitem{yuan2012} Y. Yuan, Y. Wu, Q. Wang, G. Yang, and W. Zheng, ``Job failures in high performance computing systems: A large-scale empirical study" \textit{Computers \& Mathematics with Applications,} 63(2), 2012, 365-377.
\bibitem{yadwadkar2016} N. J. Yadwadkar, B. Hariharan, J. E. Gonzalez, and R. Katz, ``Multi-task learning for straggler avoiding predictive job scheduling,” \textit{The Journal of Machine Learning
Research,} vol. 17, no. 1, 2016, pp. 3692–3728. 
\bibitem{wu2011} P. Wu, C. Ding,  L. Chen, F.  Gao, T. Davies,   C. Karlsson and Z. Chen, ``Fault tolerant matrix-matrix multiplication: correcting soft errors on-line," \textit{In Proceedings of the second workshop on Scalable algorithms for large-scale systems}  ACM., 2011, pp. 25--28.
\bibitem{chen2008} Z. Chen and J. Dongarra, "Algorithm-Based Fault Tolerance for Fail-Stop Failures," \textit{in IEEE Transactions on Parallel and Distributed Systems,} vol. 19, no. 12, pp. 1628-1641, Dec. 2008.
\bibitem{Li2014}  M. Li, D. G. Andersen, J. W. Park, A. J. Smola, A. Ahmed, V. Josifovski, J. Long, E. J. Shekita, and B.-Y. Su, ``Scaling distributed machine learning with the parameter server,” \textit{in Proc. 11th USENIX Conf. Operating Syst. Des. Implementation,} 2014, pp. 583–598.
\bibitem{Lee2016} K. Lee, M. Lam, R. Pedarsani, D. Papailiopoulos, and K. Ramchandran, ``Speeding up distributed machine learning using codes," \textit{in 2016 IEEE International Symposium on Information Theory (ISIT)}, Jul. 2016, pp. 1143--1147.
\bibitem{Huang1984} K.-H. Huang aand J.A. Abraham, ``Algorithm-Based Fault Tolerance for Matrix Operations," \textit{in IEEE Trans. on Computers,} vol.33, 1984, pp. 518--528. 
\bibitem{Tandon2017} R. Tandon, Q. Lei, A. G. Dimakis, and N. Karampatziakis, ``Gradient coding: Avoiding stragglers in distributed learning," \textit{in Proceedings of the 34th International Conference on Machine Learning,} Aug. 2017, pp. 3368--3376.
\bibitem{Reisizadeh2019} A. Reisizadeh, S. Prakash, R. Pedarsani and A. S. Avestimehr, ``Tree Gradient Coding," \textit{IEEE International Symposium on Information Theory (ISIT),} Paris, France, 2019, pp. 2808-2812.
\bibitem{Dutta2017} S. Dutta, V. R. Cadambe, and P. Grover, ``Coded convolution for parallel and distributed computing within a deadline," \textit{CoRR, vol. abs/1705.0387}, 2017. 
\bibitem{karakus2017} C. Karakus, Y. Sun, and S. Diggavi, ``Encoded distributed optimization,” \textit{in Proc. IEEE Int. Symp. Inf. Theory (ISIT),} 2017, pp. 2890–2894.
\bibitem{Yu2017F} Q. Yu, M. A. Maddah-Ali, and A. S. Avestimehr, ``Coded Fourier
transform," \textit{in Proc. 55th Annu. Allerton Conf. Commun., Control, Comput. (Allerton)}, Oct. 2017, pp. 494--501.
\bibitem{songze2017} S. Li,  M. A. Maddah-Ali, Q. Yu and  A. S. Avestimehr, ``A fundamental tradeoff between computation and communication in distributed computing." \textit{in IEEE Transactions on Information Theory} 64.1 (2017): pp. 109-128.
\bibitem{Luby2002} M.  Luby,  ``LT  codes,”  \textit{The  43rd  Annual  IEEE  Symposium  on  Foundations  of  Computer  Science,}  2002, pp. 271-280.
\bibitem{wang2018}  S. Wang, J. Liu and N. Shroff, ``Coded sparse matrix multiplication”, \textit{in Proc. 35th Int. Conf. Mach. Learn.(ICML),} vol. 12, Jul. 2018, pp. 8176-8193. 
\bibitem{maddah2020} N. A. Khooshemehr and M. A. Maddah-Ali, ``Fundamental Limits of Distributed Encoding," \textit{in IEEE International Symposium on Information Theory (ISIT),} 2020, pp. 798-803.
\bibitem{dholakia2004} A. Dholakia and S. Olcer, ``Rate-compatible low-density parity
check codes for digital subscriberlines,” \textit{in Proc. IEEE International
Conference on Communications,} Jun. 2004, pp. 415-419. 
\bibitem{Farrell} P. G. Farrell, ``A survey of array error control codes," preprint, 1990.
\bibitem{Blaum} M. Blaum and R. M. Roth, ``New Array Codes for Multiple Phased Burst Correction", \textit{in IEEE Trans. on Information Theory,} 339(1):66-77, 1993.
\bibitem{shulin} S. Lin, and D. J. Costello, Jr., Error Control Coding:
Fundamentals and Applications. Prentice-Hall. 1983.
\bibitem{Wang} Wang, Y., ``Array BP-XOR codes for reliable cloud storage systems," \textit{In Proc. of IEEE ISIT 2013,} pp. 326–330. IEEE Press (2013).
\bibitem{Paterson2013} M. B. Paterson, D. R. Stinson, and Y. Wang,  ``On encoding symbol degrees of array BP-XOR codes," \textit{Cryptography and Communications,} 8(1), pp. 19-32, 2016.
\bibitem{ArslanMojLDPC} S. S. Arslan, B. Parrein and N. Normand, "Mojette transform based LDPC erasure correction codes for distributed storage systems," \textit{25th Signal Processing and Communications Applications Conference (SIU),} Antalya, Turkey, May. 2017, pp. 1-4.
\bibitem{Yu2017} Q. Yu, M. Maddah-Ali, and S. Avestimehr, ``Polynomial codes: an optimal
design for high-dimensional coded matrix multiplication," \textit{In Advances in Neural Information Processing Systems (NIPS)} vol. 30, Curran Associates, Inc., 2017, pp. 4406--4416. 
\bibitem{LeeRam2017} K. Lee, R. Pedarsani, D. Papailiopoulos and K. Ramchandran, ``Coded computation for multicore setups," \textit{IEEE International Symposium on Information Theory (ISIT),} Aachen, 2017, pp. 2413-2417. 
\bibitem{Suayb2019} S. S. Arslan, ``Distributed Matrix Multiplication with MDS Array BP-XOR Codes for Scaling Clusters," \textit{IEEE International Symposium on Information Theory (ISIT),} Paris, France, 2019, pp. 1792-1796.
\bibitem{baharav2018} T. Baharav, K. Lee, O. Ocal, and K. Ramchandran, ``Straggler-proofing massive-scale distributed matrix multiplication with $d$-dimensional product codes," \textit{in IEEE International Symposium on Information Theory (ISIT),} Jun. 2018, pp. 1993--1997.
\bibitem{Park2019} H. Park and J. Moon, ``Irregular Product Coded Computation for High-Dimensional Matrix Multiplication," \textit{IEEE International Symposium on Information Theory (ISIT),} Paris, France, 2019, pp. 1782-1786.
\bibitem{Paterson2016} M. B. Paterson, D. R. Stinson, and Y. Wang, ``On encoding symbol degrees of array BP-XOR codes," \textit{Cryptography and Communications,} vol. 8, no. 1, 2015, pp. 19--32.
\bibitem{Fahim} M. Fahim, H. Jeong, F. Haddadpour, S. Dutta, V. Cadambe, and P. Grover, ``On the optimal recovery threshold of coded matrix multiplication," \textit{in Communication, Control, and Computing (Allerton)}, 2017, pp. 1264--1270.
\bibitem{Suayb2018} S. S. Arslan, ``Asymptotically MDS Array BP-XOR Codes," \textit{2018 IEEE International Symposium on Information Theory (ISIT),} Vail, CO, 2018, pp. 1316--1320.
\bibitem{Royston1982} J. P. Royston, Algorithm AS117. ``Expected normal order statistics (exact and approximate)," \textit{Appl. Stat.}, vol. 31, pp. 161-165, 1982. 
\bibitem{Maity2019} R. K. Maity, A. Singh Rawa and A. Mazumdar, ``Robust Gradient Descent via Moment Encoding and LDPC Codes," \textit{IEEE International Symposium on Information Theory (ISIT),} Paris, France, 2019, pp. 2734-2738, 
\bibitem{Arnold1979} B. C. Arnold,  and R. A.  Groeneveld, ``Bounds on expectations of linear systematic statistics based on dependent samples," \textit{The Annals of Statistics,} vol. 7, no. 1,  1979, pp. 220-223.
\bibitem{Kung} H. T. Kung, ``Fast Evaluation and Interpolation," Carnegie Mellon University, Tech. Rep., 1973. 
\bibitem{Lee2017} K. Lee, C. Suh, and K. Ramchandran, ``High-dimensional coded matrix multiplication," \textit{in Proc. IEEE Int. Symp. Inf. Theory,} Aachen, Germany, Jun. 2017, pp. 2418--2422.
\bibitem{Barroso} Dean J, Barroso LA. The tail at scale. \textit{Communications of the ACM.}  vol. 56, no. 2, 2013, pp. 74-80. 
\bibitem{Radon} Guedon, J., Barba, D., Burger, N. ``Psychovisual image coding via an exact discrete
Radon transform." In Wu, L., ed.: \textit{Proc. Visual Communications and Image
Processing (VCIP95),} Taipei, Taiwan, CORESA (1995) 562–572
\bibitem{Moj} J. P. Guedon, and N. Normand, ``The Mojette transform:
The first ten years." \textit{In Discrete Geometry for Computer Imagery,
E. Andres, G. Damiand, and P. Lienhardt, Eds.,} vol. 3429 of
Lecture Notes in Computer Science. Springer Berlin Heidelberg,
2005, pp. 79–91.
\bibitem{Kingston2014} A. Kingston, H. Li, N. Normand, I. Svalbe, ``Fourier Inversion of the Mojette Transform," \textit{Discrete Geometry for Computer Imagery,} Lecture Notes in Computer Science, vol 8668, Springer. 2014. 
\bibitem{gouillart2013} Gouillart, E., F. Krzakala, M. Mezard and L. Zdeborová,  ``Belief-propagation reconstruction for discrete tomography" \textit{Inverse Problems}, vol. 29, no. 3, 2013. 
\bibitem{katz} M.B. Katz, ``Questions of uniqueness and resolution in reconstruction from projections" In: Levin, S. (Ed.),  Lecture Notes in Biomathematics, vol. 26. Springer-Verlag, New York, 2013. 
\bibitem{OEIS} The On-Line Encyclopedia of Integer Sequences. Available online: https://oeis.org/ (last access: 07.07.2021). 
\end{thebibliography}
%

%


\begin{IEEEbiographynophoto}{Suayb S. Arslan} (Member, IEEE) received the B.Sc. degree in electrical and electronics engineering from Bo\u{g}aziçi University, Istanbul, Turkey, in 2006, an the M.Sc. and Ph.D. degrees in electrical engineering
from the University of California, San Diego, CA, USA, in 2009 and 2012, respectively. In 2009, he was with Mitsubishi Electric Research Laboratory, Boston, MA, USA, where he was involved in 
research and development of image processing and 
machine learning algorithms for biomedical applications. In 2011, he joined Quantum Corporation, 
Irvine, CA, USA, where he conducted research with IBM and HP on advanced data detection and 
coding algorithms as well as reliability modeling for increased capacity cold 
storage and cloud systems. He is currently affiliated  with MEF 
University as an associate professor, Istanbul, Turkey. His research interests include digital communication and 
storage systems, information and reliability theory, image/video processing, 
decentralized ecosystems, and the Internet of Things. He is currently serving 
as an Associate Editor for Internet of Things (Elsevier) journal.
\end{IEEEbiographynophoto}





\end{document}